\newcolumntype{x}[1]{>{\centering\arraybackslash\hspace{0pt}}p{#1}}
\def\R{\mathbb{R}}
\def\N{\mathbb{N}}
\def\E{\mathbb{E}}
\DeclareMathOperator*{\Bin}{Bin}
\DeclareMathOperator*{\MSE}{MSE}
\DeclareMathOperator*{\argmax}{arg\,max}
\newcommand*{\dkl}[2]{D_{\textup{KL}} \left( #1 \parallel #2 \right) }
\newcommand*{\Cdot}{\setbox0\hbox{$x$}\hbox to\wd0{\hss$\cdot$\hss}}
\newcommand*{\defeq}{\coloneqq}
\DeclarePairedDelimiter\abs{\lvert}{\rvert}%
\DeclarePairedDelimiter\norm{\lVert}{\rVert}%
\let\oldabs\abs
\def\abs{\@ifstar{\oldabs}{\oldabs*}}
\let\oldnorm\norm
\def\norm{\@ifstar{\oldnorm}{\oldnorm*}}
\definecolor{darkgreen}{rgb}{0,0.4,0}
\newcommand*{\meas}{x} 
\newcommand*{\meassp}{\mathcal{X}} 
\newcommand*{\measdim}{n} 
\newcommand*{\data}{X} 
\newcommand*{\param}{\theta} 
\newcommand*{\parsp}{\Theta} 
\newcommand*{\parden}{\pi}
\newcommand*{\pardensp}{\cP}
\newcommand*{\model}{\mathcal{M}} 
\newcommand*{\replications}{k} 
\newcommand*{\cN}{\mathcal{N}}
\newcommand*{\cI}{\mathcal{I}}
\newcommand*{\cP}{\mathcal{P}}
\newcommand*{\MMI}{\mathrm{MMI}}
\newcommand*{\MMIL}{\mathrm{MMIL}}
\newcommand*{\MLE}{\mathrm{MLE}}
\newcommand*{\MPLE}{\mathrm{MPLE}}
\newcommand*{\MAP}{\mathrm{MAP}}
\newcommand*{\true}{\text{true}}
\theoremstyle{plain}
\newtheorem{theorem}{Theorem}[section]
\newtheorem{proposition}[theorem]{Proposition}
\newtheorem{corollary}[theorem]{Corollary}
\theoremstyle{definition}
\newtheorem{definition}[theorem]{Definition}
\newtheorem{algorithm}[theorem]{Algorithm}
\newtheorem{example}[theorem]{Example}
\newtheorem{condition}[theorem]{Condition}
\newtheorem{remark}[theorem]{Remark}
\numberwithin{equation}{section}
\numberwithin{figure}{section}
\numberwithin{table}{section}
\newcommand*{\email}[1]{\bgroup\color{blue}\href{mailto:#1}{#1}\egroup}
\renewcommand*{\url}[1]{\bgroup\color{blue}\href{#1}{#1}\egroup}
\setlist[enumerate]{nosep}
\setlist[itemize]{nosep}
\renewcommand{\qedsymbol}{$\blacksquare$}
\renewenvironment{proof}[1][\proofname]{\noindent{\bfseries #1.} }{\hfill \qedsymbol \medskip}
\let\oldtitle\title
\renewcommand{\title}[1]{\oldtitle{#1}\newcommand{\theshorttitle}{#1}}
\newcommand{\shorttitle}[1]{\renewcommand{\theshorttitle}{#1}}
\let\oldauthor\author
\renewcommand{\author}[1]{\oldauthor{#1}\newcommand{\theshortauthor}{#1}}
\newcommand{\shortauthor}[1]{\renewcommand{\theshortauthor}{#1}}
\newcommand*{\affilref}[1]{\ref{affiliation#1}}
\newcommand*{\affiliation}[3]{
	\footnotetext[#1]{\label{affiliation#2} #3}
}
\title{Objective Priors in the Empirical Bayes Framework}
\shorttitle{Objective Priors in the Empirical Bayes Framework}
\author{%
	Ilja~Klebanov\textsuperscript{\affilref{ZIB}}%
	\and
	Alexander~Sikorski\textsuperscript{\affilref{ZIB}}%
	\and
	Christof~Sch{\"u}tte\textsuperscript{\affilref{ZIB},\affilref{FUB}}%
	\and
	Susanna~R{\"o}blitz\textsuperscript{\affilref{UIB}}%
}
\begin{document}
\maketitle
\affiliation{1}{ZIB}{Zuse Institute Berlin, \email{klebanov@zib.de}, \email{sikorski@zib.de}, \email{schuette@zib.de}}
\affiliation{2}{FUB}{Freie Universit{\"a}t Berlin, Department of Mathematics, \email{schuette@mi.fu-berlin.de}}
\affiliation{3}{UIB}{University of Bergen, Department of Informatics \email{susanna.roblitz@uib.no}}

\begin{abstract}
	\noindent\textbf{\textsf{Abstract:}}
	When dealing with Bayesian inference the choice of the prior often remains a debatable question. Empirical Bayes methods offer a data-driven solution to this problem by estimating the prior itself from an ensemble of data. 
In the nonparametric case, the maximum likelihood estimate is known to overfit the data, an issue that is commonly tackled by regularization. However, the majority of regularizations are ad hoc choices which lack invariance under reparametrization of the model and result in inconsistent estimates for equivalent models. 
We introduce a non-parametric, transformation invariant estimator for the prior distribution. Being defined in terms of the missing information similar to the reference prior, it can be seen as an extension of the latter to the data-driven setting.
This implies a natural interpretation as a trade-off between choosing the least informative prior and incorporating the information provided by the data, a symbiosis between the objective and empirical Bayes methodologies.
	
	\medskip

	\noindent\textbf{\textsf{Keywords:}}
	{Parameter estimation},
{Bayesian inference},
{Bayesian hierarchical modeling},
{invariance},
{hyperprior},
{MPLE},
{reference prior},
{Jeffreys prior},
{missing information},
{expected information}

	\medskip

	\noindent\textbf{\textsf{2010 Mathematics Subject Classification:}}
	62C12, 62G08
\end{abstract}

\section{Introduction}
\label{section:introduction}

Inferring a parameter $\param\in\parsp$ from a measurement $\meas\in\meassp$  using Bayes' rule
requires prior knowledge about $\param$, which is not given in many applications.
This has led to a lot of controversy in the statistical community and to harsh
criticism concerning the objectivity of the Bayesian approach.
In the past decades, Bayesian statisticians have given (at least) two solutions to the dilemma of missing prior information:
\begin{enumerate}[label=(\Alph*)]
\item \textbf{(non-informative) objective priors:} 
Objective Bayesian analysis and reference priors in particular \citep{bernardo1979reference,berger1992development,berger2009formal} apply mostly information theoretic ideas to construct priors that are invariant under reparametrization and can be argued to be non-informative.
\item \textbf{empirical Bayes methods:}
If independent measurements $\meas_m\in\meassp$, $m=1,\dots,M$, are given for a large number $M$ of `individuals' with individual parametrizations $\param_m\in\parsp$, which is the case in many statistical studies, empirical Bayes methods \citep{robbins1956,casella1985introduction,efron2012large,carlin2010bayes,maritz2018empirical} use this knowledge to construct an \emph{informative} prior as a first step and then apply it for the Bayesian inference of the individual parametrizations $\param_m$ (or any future parametrization $\param_\ast$ with measurement $\meas_\ast$) in a second step.
A typical application is the retrieval of patient-specific
parametrizations in large clinical studies, see e.g.\ \citet{neal2010current}.
However, as discussed below, many of these methods fail to be consistent under reparametrization.
\end{enumerate}
The aim of this manuscript is to extend the construction of reference priors to the 
empirical Bayes framework in order to derive transformation invariant and informative priors from such `cohort data'.
We will perform this construction along the lines of the definition of reference priors in \citet{berger2009formal} and use a similar notation. Likewise, we concentrate mainly on problems with one continuous parameter ($\parsp\subseteq\R^d,\ d=1$), possible generalizations to the multiparameter case $d>1$ are addressed shortly in \Cref{section:HighDimensions}.

This paper is organized as follows.
After introducing the notation in \Cref{section:Setup}, we discuss empirical Bayes methods and the inconsistency of maximum penalized likelihood estimation (MPLE) under reparametrization, which is the main motivation for our work, in \Cref{section:EmpiricalBayesMethods}.
\Cref{section:ObjectiveEmpirical} provides a solution to this issue by following the same ideas as in the construction of reference priors, resulting in a prior estimate we term the \emph{empirical reference prior}. A rigorous definition and analysis of empirical reference priors is given.
\Cref{section:Practical} provides two algorithms how the empirical reference prior can be implemented in practice (under the assumption of asymptotic normality) --- one based on optimization and the other on a fixed point iteration.
In \Cref{section:Numerical} we then apply our methodology to a synthetic data set, illustrating invariance of the empirical reference prior under reparametrizations, as well as to the famous baseball data set of \citet{efron1975data}, where we compare our approach to the James--Stein estimator.
In \Cref{section:HighDimensions} we discuss possible generalizations of our approach to the multiparameter case $d>1$, followed by a short conclusion in \Cref{section:Conclusion}.
\section{Setup and Notation}
\label{section:Setup}

We will work in the empirical Bayes framework described above and visualized in Figure \ref{fig:GraphicalModel}.
Specifically, denoting the likelihood model by
\[
\model = \big\{ p(\meas|\param),\,  \meas\in\meassp,\, \param\in\parsp\big\},
\qquad
\meassp\subseteq\R^\measdim,
\qquad
\parsp\subseteq\R^d,\ d=1,
\]
the data $\data = (\meas_1,\dots,\meas_M)$ is generated by a two-stage process, where the ``individual'' parametrizations $\param_1,\dots,\param_M$ are independent draws from the unknown prior distribution $p(\param)$ and each data point $\meas_m\in\meassp$ is drawn independently from $p(\meas|\param_m)$, $m=1,\dots,M$.
Note that $\model$ denotes the model corresponding to the complete observation \emph{vector} $\meas\in\meassp \subseteq\R^\measdim$ and that the data $\data = (\meas_1,\dots,\meas_M)$ consists of $M$ observation vectors $x_1,\dots,x_M\in\meassp$.
This convention is necessary because our theory requires the introduction of
(artificial) independent replications of the entire experiment, denoted by the model
\[
\model^\replications = \Big\{ p(\vec\meas|\param) = \prod_{i=1}^{\replications} p(\meas^{(i)}|\param) ,\,  \vec\meas = (\meas^{(1)},\dots,\meas^{(\replications)})\in\meassp^\replications,\,  \param\in\parsp\Big\}.
\]
We adopt the standard abuse of notation, denoting all density functions by the letter $p$ and letting the argument indicate which random variable it belongs to, e.g.\ $p(\meas)$ is the marginal density of $\meas$ while $p(\param|\meas)$ denotes the posterior density of $\param$ given $\meas$. In addition, $\parden(\theta)$ denotes any other possible (proposed, guessed or estimated) prior on $\parsp$ from some class $\pardensp$ of admissible priors, $p(\meas|\parden) := \int p(\meas|\param) \parden(\param) \mathrm d\param$ the corresponding prior predictive distribution and $\parden(\theta|\meas)\propto \parden(\theta)p(\meas|\param)$ the corresponding posterior.
Further, $\parden_{\true}(\param) \coloneqq p(\param)$ denotes the ``true'' data-generating prior.

The prior may be viewed as a hyperparameter $\parden\in\pardensp$, in which case we assume conditional independence of $\parden$ and $\meas$ given $\param$. The marginal likelihood of $\parden$ given the entire data $\data = (\meas_1,\dots,\meas_M)$ is given by
\begin{equation}
\label{equ:MarginalLikelihood}
L(\parden) = L(\parden\, |\, \model,\data) = \prod_{m=1}^{M}p(\meas_m|\parden),
\qquad
p(\meas|\parden) = \int p(\meas|\param)\, \parden(\param)\, \mathrm d\param .
\end{equation}
We will also assume both the parametric and the hyperparametric model to be
identifiable, see \cite[Section 5.5]{van2000asymptotic}, i.e.
\begin{align}
\label{equ:ParIdentifiability}
p(\meas|\param_1) = p(\meas|\param_2)
\quad &\Longleftrightarrow\quad 
\param_1 = \param_2,
\qquad\ 
\param_1,\param_2\in\parsp,
\\
\label{equ:PardenIdentifiability}
p(\meas|\parden_1) = p(\meas|\parden_2)
\quad &\Longleftrightarrow\quad 
\parden_1 = \parden_2,
\qquad
\parden_1,\parden_2\in\pardensp,
\end{align}
since otherwise there would be no chance to recover the true distribution $\parden_{\true}(\param) = p(\param)$ from no matter how many measurements.

\begin{figure} 
     \centering
        \begin{subfigure}[b]{0.28\textwidth}
                \centering
                \tikz{ %
                    \node[latent] (pi) {$p(\param)$} ; %
                    \node[latent, below=of pi] (theta) {$\param$} ; %
                    \node[obs, below=of theta] (X) {$\meas$} ; %
                    \plate[inner sep=0.25cm, yshift=0.12cm] {plate2} {(theta) (X)} {$M$}; %
                    \edge {pi} {theta} ; %
                    \edge {theta} {X} ; %
                  }
        \end{subfigure}               
        \hspace{0.9cm}     
     \begin{subfigure}[b]{0.48\textwidth}
             \centering
             \tikz{ %
             \node[latent] (pi) {$p(\param)$} ; %
             \node[below=1.3cm of pi] (dots1) {$\cdots$} ; %
             \node[latent, left=of dots1] (theta1) {$\theta_1$} ; %
             \node[obs, below=1.3cm of theta1] (X1) {$\meas_1$} ; %
             \node[latent, right=of dots1] (thetaM) {$\theta_M$} ; %
             \node[obs, below=1.3cm of thetaM] (XM) {$\meas_M$} ; %
             \node[right=of X1] (dots2) {$\cdots$} ; %
             	\path
                 (pi) edge[->] node [right = 1.4cm]{$\param_m\stackrel{\textup{i.i.d.}}{\sim} \parden_{\true}(\param) = p(\param)$} (thetaM)
                 (thetaM) edge[->] node [right = 0.5cm]{$\meas_m\sim p(\meas|\param_m)$} (XM)
                 (pi) edge[->] (theta1)
                 (theta1) edge[->] (X1);
               }
     \end{subfigure}
        \hspace{0.9cm}     
        \caption{Graphical model and schematic representation of the underlying probabilistic model. Such a setup will be referred to as empirical Bayes framework.}
        \label{fig:GraphicalModel}
\end{figure}
\section{Inconsistency of Empirical Bayes Methods}
\label{section:EmpiricalBayesMethods}
Roughly speaking, empirical Bayes methods perform statistical inference in two steps -- first, they estimate the prior from the entire data  $\data = (\meas_1,\dots,\meas_M)$, and second, they apply Bayes' rule with that prior for each $\meas_m$ separately\footnote{For theoretical purposes one should rather think of applying Bayes' rule to some future measurement $\meas_\ast$ to infer its parametrization $\param_\ast$ in order to avoid reusing the data. In practice and for a large number $M$ of measurements this distinction is of little relevance, since the influence of one data point $x_m$ on the prior estimation can usually be neglected.}.
We concentrate on the first step, which is often performed by maximizing the marginal likelihood $L(\parden)$, for which the EM algorithm introduced by \citet{dempster1977maximum} is the standard tool.
This procedure can be viewed as an interplay of frequentist and
Bayesian statistics: The prior is chosen by maximum likelihood estimation (MLE), the actual individual parametrizations are then inferred using Bayes' rule.
However, in the nonparametric case (meaning, that no finite-dimensional parametric form of the prior is assumed), it can be proven that the marginal likelihood $L(\parden)$ is maximized by a discrete distribution $\parden_\MLE$
\begin{equation}
\label{equ:badMLE}
\parden_{\MLE}
=
\argmax_{\parden}\, \log L(\parden)
=
\sum_{\nu=1}^{N} w_\nu\delta_{\check\param_\nu},
\qquad
N\le M,\ w_\nu>0,\ \sum_{\nu=1}^{N} w_\nu = 1,
\end{equation}
with at most $M$ nodes $\check\param_\nu$, see \citet[Theorems 2-5]{laird1978nonparametric} or \citet[Theorem 21]{lindsay1995mixture}.
This typical issue of \emph{overfitting the data} is often dealt with by subtracting a \emph{roughness penalty} (or \emph{regularization term}) $\Phi(\parden) = \Phi(\parden\, |\, \model)$ from the marginal log-likelihood function $\log L(\parden)$, such that smooth or non-informative priors are favored, resulting in the so-called maximum penalized likelihood estimate (MPLE):
\begin{equation}
\label{equ:defMPLE}
\parden_{\MPLE} = \argmax_{\parden}\, \log L(\parden) - \gamma\Phi(\parden).
\end{equation}
The constant $\gamma>0$ balances the trade-off between goodness of fit and smoothness or non-informativity of the prior. This approach can be viewed from the Bayesian perspective as choosing a hyperprior $p(\parden) \, \propto\, e^{-\gamma\Phi(\parden)}$ for the hyperparameter $\parden$ and performing a maximum a posteriori (MAP) estimation for $\parden$:
\begin{equation}
\label{equ:MAPgleichMPLE}
\parden_\MAP = \argmax_{\parden}\, L(\parden)\, p(\parden)
=
\argmax_{\parden}\, \log L(\parden)-\gamma\Phi(\parden) = \parden_\MPLE\, .
\end{equation}
Favorable properties of the roughness penalty function  $\Phi(\parden) = \Phi(\parden\, |\, \model)$ are:
\begin{enumerate}[label=(\alph*)]
  \item non-informativity: without any extra information about the parameter
  or the prior, we want to keep our assumptions to a minimum (in the
  sense of objective Bayes methods);
  \item invariance under transformations of the parameter space $\parsp$ (reparametrizations);
  \item invariance under transformations of the measurement space $\meassp$;
  \item convexity: Since $\log L(\parden)$ is concave in the NPMLE case \citep[Section 5.1.3]{lindsay1995mixture}, a convex penalty function $\Phi(\parden)$ would guarantee a concave optimization problem \eqref{equ:defMPLE};
  \item natural and intuitive justification.
\end{enumerate}
The penalty functions currently used are mostly ad hoc and rather
brute force solutions that confine amplitudes, e.g.\ ridge regression \citep[Section 1.6]{mclachlan2007algorithm}, or derivatives \citep{good1971nonparametric,silverman1982estimation} of the prior, which are neither invariant under reparametrizations nor have a natural derivation.
A more contemporary alternative is to use Dirichlet process hyperpriors $p(\pi)$, which have similar limitations.
In order to incorporate the notion of non-informativity, 
\citet{good1963maximum} suggested to use the entropy as a roughness
penalty,
\begin{equation}
\label{equ:EntropyPenalty}
\Phi_{H_\param}(\parden) = -\, H_\param(\parden) = \int_\meassp \parden(\param)\log \parden(\param)\,
\mathrm d\param,
\end{equation}
which is a natural approach from an information-theoretic point of view, since high entropy corresponds to high uncertainty or thereby non-informativity of the prior.
However, $\Phi_{H_\param}$ is not invariant under reparametrizations, making it, as Good puts it, ``somewhat arbitrary'' \citep[p. 912]{good1963maximum}:
\begin{quote}
	``It could be objected that, especially for a continuous distribution, entropy is
	somewhat arbitrary, since it is variant under a transformation of the
	independent variable.''
\end{quote}
Following Shannon's derivation of the entropy $H_\param$, we will explain why the
concepts of mutual information and missing information, both of which are invariant under transformations, are far more natural quantities to use in our setup.
Prior to that, let us make the notion of invariance more precise.

\subsection{Invariance under Transformations}
Invariance under transformations guarantees consistency of the resulting probability density estimate and follows the following principle:
If two statisticians use equivalent models to explain equivalent data their
results must be consistent, or, as \citet{shore1980axiomatic} put it,
\begin{quote}
	``[\ldots] reasonable methods of inductive inference should lead to consistent results when there are different ways of taking the same information into account (for example, in different coordinate systems).''
\end{quote}
Let us start with the definition:
\begin{definition}
\label{def:Invariance}
Let $\data=(\meas_1,\dots,\meas_M)$ be data generated by the hierarchical model described in \Cref{section:Setup} with $\model = \{ p(\meas|\param),\,  \meas\in\meassp,\, \param\in\parsp\subseteq\R^d\}$.
Further, let $F = F[\model,\data]$ be a function operating on probability densities in $\R^d$.
\begin{enumerate}[label=(\roman*)]
	\item We call $F$ \emph{invariant under transformations of the parameter $\param$} or \emph{invariant under reparamet-rization}, if $F[\model,\data](\parden) = F[\model_\varphi,\data](\varphi_\ast \parden)$ for any diffeomorphism $\varphi\colon \parsp\to\tilde \parsp,\, \param\mapsto\tilde \param$, where the transformed model $\model_\varphi$ is defined by
	\begin{equation}
	\label{equ:phiTransform}
	p(\meas|\tilde{\param}) = p\big(\meas|\param
	=
	\varphi^{-1}(\tilde{\param})\big).		
	\end{equation}
	\item We call $F$ \emph{invariant under transformations of the measurement $\meas$}, if $F[\model,\data] = F[\model_\psi,\data_\psi]$ for any diffeomorphism $\psi\colon \meassp\to\tilde \meassp,\ \meas\mapsto\tilde \meas$, where the transformed model $\model_\psi$ and data $\data_\psi = (\tilde \meas_1,\dots,\tilde \meas_M)$ are defined by
	\begin{equation}
	\label{equ:psiTransform}
	p(\tilde\meas|\param)
	=
	\psi_\ast p(\meas|\param)\Big|_{\tilde{\meas}}
	=
	\left|\det(\psi^{-1})'(\tilde x)\right| \cdot  p\big(x = \psi^{-1}(\tilde x)\, |\, \param\big),
	\qquad
	\tilde\meas_m = \psi(\meas_m).
	\end{equation}	
\end{enumerate}
Here and in the following, $\varphi_\ast$ and $\psi_\ast$ denote the pushforwards of some measure (or density) under $\varphi$ and $\psi$, respectively.
\end{definition}

If we wish to use the function $F$ for the estimation of the prior $\parden$ (as in equation \eqref{equ:defMPLE} for $F = \log L - \gamma\Phi$), invariance of $F$ causes the following diagrams to commute. Note that, if we restrict the considerations to some class $\pardensp$ of admissible priors, then this class needs to be transformed correspondingly,
$
\pardensp_\varphi := \{\varphi_\ast\parden,\ \parden\in\pardensp\},
$
such that the considered class of priors is consistent under reparametrization.
\begin{figure}[H]
	\centering
	\begin{tikzcd}[column sep=huge,row sep=huge]
		(\model,\pardensp,\data)
		\arrow[r,shift left=.6ex,"\varphi",mapsto]
		\arrow[d,"\text{estimation of } \parden" description,mapsto]
		&
		(\model_\varphi,\pardensp_\varphi,\data)
		\arrow[l,shift left=.6ex,"\varphi^{-1}",mapsto]
		\arrow[d,"\text{estimation of } \parden" description,mapsto]
		\\
		\parden_{\textup{est}}(\param)
		\arrow[r,shift left=.6ex,"\varphi_\ast",mapsto]
		&
		\parden^\varphi_{\textup{est}}(\tilde \param)
		\arrow[l,shift left=.6ex,"\varphi^{-1}_\ast",mapsto]
	\end{tikzcd}
	\hspace{0.8cm}
	\begin{tikzcd}[column sep=huge,row sep=huge]
		(\model,\pardensp,\data)
		\arrow[r,shift left=.6ex,"\psi",mapsto]
		\arrow[d,"\text{estimation of } \parden" description,mapsto]
		&
		(\model_\psi,\pardensp,\data_\psi)
		\arrow[l,shift left=.6ex,"\psi^{-1}",mapsto]
		\arrow[d,"\text{estimation of } \parden" description,mapsto]
		\\
		\parden_{\textup{est}}(\param)
		\arrow[r,shift left=.6ex,"\psi_\ast",mapsto]
		&
		\parden^\psi_{\textup{est}}(\param)
		\arrow[l,shift left=.6ex,"\psi^{-1}_\ast",mapsto]
	\end{tikzcd}
	\caption{Commutative diagrams illustrating the consistency of the density estimate $ \parden_{\textup{est}}(\param)$. If the estimation is performed in a transformed parameter space $\tilde \parsp = \varphi(\parsp)$ or measurement space $\tilde \meassp = \psi(\meassp)$ (the model $\model$, the class $\pardensp$ of admissible priors and the data $\data=(\meas_1,\dots,\meas_M)$ are transformed accordingly), the results should be consistent. Note that, in the empirical Bayes framework, the data $\data$ enters in the estimation of the prior, hence the commutativity of the right diagram is not trivial.}
	\label{fig:CommutativeDiagramInvariance}
\end{figure}
One class of functions that fulfills these invariance properties is introduced in the following theorem, where we also show that the marginal likelihood $L(\parden)$ is transformation invariant up to a constant.
\begin{theorem}
\label{theorem:invarianceLandI}
Let $\data=(\meas_1,\dots,\meas_M)$ be data stemming from a model $\model = \{ p(\meas|\param),\,  \meas\in\meassp,\, \param\in\parsp\}$ and
\[
F_g(\parden)
=
F_g[\model](\parden)
=
\int_\parsp \int_\meassp  \parden(\param)\,  p(\meas|\param)\,
g\Big(\frac{p(\meas|\param)}{ p(\meas|\parden)}\Big)
\mathrm d\meas\, \mathrm d\param
\]
for some measurable function $g\colon\R\to\R$, such that the above integral is well-defined.
Then $F_g$ is invariant under transformations of $\param$ and $\meas$.
Further, the marginal likelihood $L(\parden) = L(\parden\, |\, \model,\data)$ defined by \eqref{equ:MarginalLikelihood} is invariant under transformations of $\param$ and $\meas$ up to a multiplicative constant.
\end{theorem}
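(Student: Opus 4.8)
The plan is to verify both invariance properties by direct change of variables, resting on a single structural observation: the likelihood ratio $p(\meas|\param)/p(\meas|\parden)$ sitting inside $g$ is itself invariant under both transformations. Once this is established, the remaining factor $\parden(\param)\,p(\meas|\param)\,\mathrm d\meas\,\mathrm d\param$ is just a measure carried along by the substitution, and equality of the two integrals follows for \emph{any} measurable $g$.

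First I would treat the reparametrization $\varphi\colon\parsp\to\tilde\parsp$. The key preliminary step is to check that the prior predictive is unchanged, $p(\meas|\varphi_\ast\parden)=p(\meas|\parden)$ for every $\meas$: substituting $\param=\varphi^{-1}(\tilde\param)$ in $p(\meas|\varphi_\ast\parden)=\int p(\meas|\tilde\param)\,(\varphi_\ast\parden)(\tilde\param)\,\mathrm d\tilde\param$ and using both $p(\meas|\tilde\param)=p(\meas|\varphi^{-1}(\tilde\param))$ from \eqref{equ:phiTransform} and the pushforward identity $(\varphi_\ast\parden)(\tilde\param)\,\mathrm d\tilde\param=\parden(\param)\,\mathrm d\param$ collapses it back to $p(\meas|\parden)$. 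Since the numerator $p(\meas|\tilde\param)=p(\meas|\param)$ and the denominator coincide at corresponding points $\tilde\param=\varphi(\param)$, the argument of $g$ is literally the same number, and applying the same change of variables to the outer $\param$-integral in $F_g$ turns $F_g[\model_\varphi](\varphi_\ast\parden)$ into $F_g[\model](\parden)$.

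Next I would treat the measurement transformation $\psi\colon\meassp\to\tilde\meassp$. Here the crucial computation is that the Jacobian introduced by the pushforward cancels in the ratio: from $p(\tilde\meas|\param)=|\det(\psi^{-1})'(\tilde\meas)|\,p(\psi^{-1}(\tilde\meas)|\param)$ in \eqref{equ:psiTransform} one obtains the identical factor in the predictive, $p(\tilde\meas|\parden)=|\det(\psi^{-1})'(\tilde\meas)|\,p(\psi^{-1}(\tilde\meas)|\parden)$, so that
\[
\frac{p(\tilde\meas|\param)}{p(\tilde\meas|\parden)}=\frac{p(\meas|\param)}{p(\meas|\parden)},\qquad \meas=\psi^{-1}(\tilde\meas).
\]
Combining this with the measure identity $p(\tilde\meas|\param)\,\mathrm d\tilde\meas=p(\meas|\param)\,\mathrm d\meas$ under $\tilde\meas=\psi(\meas)$, and noting that $\parden$ and the $\param$-integration are untouched, converts $F_g[\model_\psi](\parden)$ back into $F_g[\model](\parden)$. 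For the marginal likelihood I would argue factor by factor: under $\varphi$ the predictive and the data are both unchanged, so $L$ is exactly invariant; under $\psi$ each factor picks up a Jacobian $p(\tilde\meas_m|\parden)=|\det(\psi^{-1})'(\tilde\meas_m)|\,p(\meas_m|\parden)$, whose product $\prod_m|\det(\psi^{-1})'(\psi(\meas_m))|$ depends only on $\psi$ and the data, not on $\parden$, giving invariance up to a multiplicative constant as claimed.

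I do not expect a genuine obstacle here; the statement is essentially a bookkeeping exercise. The only point demanding care is keeping the two sources of determinant straight --- the factor from the pushforward of the model density in \eqref{equ:psiTransform} versus the factor from the change of variables in the $\mathrm d\tilde\meas$ integral --- and confirming that they cancel \emph{inside} $g$ rather than accumulating. That exact cancellation is precisely why the result holds for arbitrary measurable $g$, and it is the feature I would emphasize.
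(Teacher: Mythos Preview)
Your proposal is correct and is precisely the change-of-variables computation the paper has in mind; the paper's own proof consists of the single line ``This is a straightforward application of the change of variables formula.'' Your write-up simply spells out that application, including the key observation that the Jacobian cancels inside the ratio $p(\meas|\param)/p(\meas|\parden)$, so there is nothing to add or correct.
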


\begin{proof}
This is a straightforward application of the change of variables formula.
\end{proof}
\section{Objective Bayesian Approach to Empirical Bayes Methods}
\label{section:ObjectiveEmpirical}

The lack of invariance of common empirical Bayes methods described above will now be tackled by an approach similar to the construction of reference priors and performed along the lines of \citet{berger2009formal}.
The two key ingredients for defining reference priors are permissibility, which yields a rigorous justification for dealing with improper priors, and the Maximizing Missing Information (MMI) property, which is derived from information theoretic considerations and can be argued to guarantee the least informative prior.

Permissibility allows for the (positive and continuous) prior to be improper as long as it yields a proper posterior for each measurement (which is the object Bayesian statisticians are actually interested in) and these posteriors can be approximated by using proper priors arising from restricting the prior to compact subspaces $\parsp_i\subseteq\parsp$.

In the empirical Bayes framework, where the aim is to approximate the true (proper) prior $\parden_{\true}(\param) = p(\param)$, improper priors are less of an issue and we will limit ourselves to proper priors. Furthermore, it is completely unclear how to deal with improper priors in this framework, since the `restriction property' of reference priors is neither achievable nor desirable, see \Cref{rem:ResrictionProperty} and \Cref{ex:ResrictionProperty} below.
For this reason and since the concept of permissibility has been elaborated extensively in \citet{berger2009formal}, we will just state its definition.
\begin{definition} 
A strictly positive continuous function $\parden(\param)$ is a \emph{permissible prior} for model $\model$, if
\begin{enumerate}[label=(\roman*)]
\item
for each $\meas\in\meassp$, $\parden(\param|\meas)$ is proper, i.e. $\int_\parsp p(\meas|\param)\, \parden(\param)\, \mathrm d\param < \infty$,
\item for any increasing sequence of compact subsets $\parsp_i\subseteq\parsp$, $i\in\N$, with $\bigcup_i\parsp_i = \parsp$, the corresponding posterior sequence $\parden_i(\param|\meas)$ is expected logarithmically convergent to $\parden(\param|\meas)$,
\[
\lim_{i\to\infty} \dkl{\parden_i(\Cdot|x)}{\parden(\Cdot|x)} = 0,
\qquad
\parden_i = \frac{ \parden\, \mathds 1_{\parsp_i}}{\int_{\parsp_i} \parden(\param)\, \mathrm d\param} .
\]
\end{enumerate}
\end{definition}
Here, $\mathds 1_{\parsp_i}$ is the indicator function of the subset $\parsp_i \subseteq \parsp$ and $\dkl{\Cdot}{\Cdot}$ denotes the Kullback-Leibler divergence defined by
\[
\dkl{p}{q} \coloneqq \int_\parsp p(\param)\, \log\bigg(\frac{p(\param)}{q(\param)}\bigg)\, \mathrm d\param .
\]
While permissibility is rather a technicality for dealing with improper priors, the MMI property should be seen as the defining property of reference priors and will now be discussed in more detail.

As motivated in the introduction, penalizing by means of entropy provides a
natural approach to incorporate the idea of non-informativity about the
parameter into the inference process.
However, if we follow Shannon's derivation of the entropy $H_\param$, we see that it is not the appropriate notion in our setup.
\citet{shannon2001mathematical} derived the entropy $H_\param$ from the insight that the proper way to quantify the information gain, when an event with
probability $p\in [0,1]$ actually occurs, is $-\log(p)\in [0,\infty]$. He then defined the entropy as
the expected information gain. However, the continuous analogue to this notion,
the differential entropy $H_\param$ given by \eqref{equ:EntropyPenalty}, faces several complications:
\begin{itemize}
	\item
	The information gain $-\log(\parden(\param))$ from observing the value $\param$, as well as the entropy $H_\param$ itself can become negative, which is difficult to interpret.
	\item
	$H_\param$ is variant under transformations of $\param$, leading to an inconsistent notion of information.
	\item
	The information gain $-\log(\parden(\param))$ relies on a \emph{direct} and \emph{exact} (error-free) measurement of $\param$, which is not plausible in the continuous case.
\end{itemize}
The last point becomes even more relevant in the empirical Bayes framework, where $\param$ is not (and usually cannot be) measured directly, but is inferred from the measurement $\meas$
of another quantity. The appropriate notion for the information gain in this setup is the Kullback-Leibler divergence $\dkl{\parden(\Cdot|x)}{\parden}$ between posterior and prior \citep{zbMATH03065448}.
Its expected value (from one observation of model $\model$), the so-called \emph{expected information} or \emph{mutual information} of $\param$ and $\meas$, is always non-negative and invariant under transformations.
\begin{definition} 
The \emph{expected information} gained from one observation of model $\model$ on a parameter $\param$ with prior $\parden(\param)$ is given by
\[
\cI[\parden\, |\, \model]
=
\int_\meassp p(\meas|\parden)\, \dkl{\parden(\Cdot|x)}{\parden}\, \mathrm d\meas
=
\int_\parsp \int_\meassp  \parden(\param)\,  p(\meas|\param)\, \log\Big(\frac{p(\meas|\param)}{ p(\meas|\parden)}\Big)\, \mathrm d\meas\, \mathrm d\param.
\]
\end{definition}
The expected information has very appealing properties for a penalty term:
\begin{theorem}
\label{theorem:ExpectedInformationConcaveInvariant}
The expected information $\cI[\parden\, |\, \model]$ is concave in $\parden$ and invariant under transformations of $\param$ and $\meas$.
\end{theorem}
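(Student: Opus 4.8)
The plan is to treat the two assertions separately, since invariance is essentially already contained in the preceding theorem while concavity requires a genuine, if short, computation.

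For the invariance, I would simply observe that the expected information is the special case $g=\log$ of the family $F_g$ from Theorem \ref{theorem:invarianceLandI}; that is, $\cI[\parden\,|\,\model] = F_{\log}[\model](\parden)$, as one sees by comparing the two double integrals. Consequently, invariance under transformations of both $\param$ and $\meas$ is immediate from that theorem, with nothing left to verify beyond matching the expressions.

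For the concavity, I would argue directly rather than appeal to any abstract entropy inequality. Fix two priors $\parden_0,\parden_1\in\pardensp$ and $\lambda\in(0,1)$, and set $\parden_\lambda = \lambda\parden_0+(1-\lambda)\parden_1$. The key structural fact is that the prior predictive is \emph{linear} in the prior: writing $p_j(\meas)=p(\meas|\parden_j)$ and $p_\lambda(\meas)=p(\meas|\parden_\lambda)$, one has $p_\lambda = \lambda p_0 + (1-\lambda)p_1$ directly from $p(\meas|\parden)=\int p(\meas|\param)\parden(\param)\,\mathrm d\param$. I would then examine the concavity gap
\[
\Delta := \cI[\parden_\lambda\,|\,\model] - \lambda\,\cI[\parden_0\,|\,\model] - (1-\lambda)\,\cI[\parden_1\,|\,\model].
\]
Splitting the integrand of $\cI[\parden_\lambda]$ according to $\parden_\lambda=\lambda\parden_0+(1-\lambda)\parden_1$ and grouping with the corresponding terms coming from $\cI[\parden_0]$ and $\cI[\parden_1]$, the factors $\log p(\meas|\param)$ cancel pairwise, leaving only differences of the form $\log\big(p_j(\meas)/p_\lambda(\meas)\big)$. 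After integrating out $\param$ via $\int\parden_j(\param)\,p(\meas|\param)\,\mathrm d\param = p_j(\meas)$, this collapses to
\[
\Delta = \lambda\,D_{\rm KL}(p_0\,\|\,p_\lambda) + (1-\lambda)\,D_{\rm KL}(p_1\,\|\,p_\lambda) \ \ge\ 0,
\]
with $D_{\rm KL}$ now taken between predictive densities on $\meassp$; nonnegativity of the Kullback--Leibler divergence yields exactly the desired inequality $\cI[\parden_\lambda]\ge\lambda\,\cI[\parden_0]+(1-\lambda)\,\cI[\parden_1]$.

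The main obstacle is not conceptual but a matter of justifying the integral manipulations: the pairwise cancellation of the $\log p(\meas|\param)$ terms and the interchange of the $\param$- and $\meas$-integrations (Fubini) must be carried out under the standing assumption that the integrals defining $\cI$ are finite, so that no $\infty-\infty$ arises when forming $\Delta$. A safe route is to confirm that each of the three expected-information integrals converges and that the intermediate mixed terms are absolutely integrable before recombining them; once finiteness is secured, the algebra and the appeal to $D_{\rm KL}\ge 0$ are routine. An alternative, slightly less self-contained, route would decompose $\cI$ into the differential entropy of the predictive $p(\Cdot|\parden)$ minus the conditional differential entropy, note that the conditional term is affine in $\parden$ while the predictive entropy is concave as the composition of the concave entropy functional with the linear map $\parden\mapsto p(\Cdot|\parden)$; but this must contend with the possible non-finiteness and sign indeterminacy of differential entropy, which the direct computation above sidesteps.
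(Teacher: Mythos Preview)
Your proposal is correct and matches the paper's approach: for invariance the paper also appeals directly to Theorem~\ref{theorem:invarianceLandI}, and for concavity it simply cites \cite[Theorem~2.7.4]{cover2012elements} (concavity of mutual information in the input distribution), whose standard proof is exactly the KL-gap computation $\Delta=\lambda D_{\rm KL}(p_0\|p_\lambda)+(1-\lambda)D_{\rm KL}(p_1\|p_\lambda)\ge 0$ that you spell out. Your closing remark about the entropy decomposition is in fact closer to how the cited reference phrases it, so both of your routes are accounted for.
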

\begin{proof}
Concavity is proven in \citet[Theorem 2.7.4]{cover2012elements} while invariance follows directly from Theorem \ref{theorem:invarianceLandI}.
\end{proof}

As argued in \citet{bernardo1979reference}, the quantity $\cI[\parden\, |\, \model^\replications]$, the expected information on $\param$ gained from $\replications$ independent observations of $\model$, describes the \emph{missing information} on $\param$ as $\replications$ goes to infinity:
\begin{quote}
	``By performing infinite replications of $\model$ one would get to know precisely the value of $\param$. Thus, $\cI[\parden\, |\, \model^\infty]$ measures the amount of missing information about $\param$ when the prior is $\parden(\theta)$.
	It seems natural to define ``vague initial knowledge'' about $\param$ as that described by the density $\parden(\param)$ which maximizes the missing information in the class $\pardensp$.''
	\footnote{The notation in this quotation has been adapted to ours. Note that the first statement makes use of the identifiability \eqref{equ:ParIdentifiability} of $\param$.}
\end{quote}
Following this idea, maximizing the missing information ($\MMI$) results in the least informative prior, making $\Phi(\parden) = - \cI[\parden\, |\, \model^\infty]$ an appealing penalty term in \eqref{equ:defMPLE}.
It is now tempting to define empirical reference priors by
\begin{equation}
\label{equ:badDefinitionER}
\parden_{\ast} = \argmax_\parden\,  \big(\log L(\parden)  + \gamma\, \cI[\parden\, |\, \model^\infty]\big).
\end{equation}
However, since $\cI[\parden\, |\, \model^\replications]$ typically diverges for $\replications\to\infty$, the following detour around the optimization formulation \eqref{equ:badDefinitionER} appears necessary (as we will see in Section \ref{section:AsymptoticNormality}, some simplifications are possible under certain regularity conditions):
\begin{definition} 
\label{def:MMI}
Let $\model = \big\{ p(\meas|\param),\,  \meas\in\meassp,\, \param\in\parsp\big\}$ be a model, $\pardensp$ be a class of prior functions $\parden$ with $\int p(\meas|\param)\, \parden(\param)\, \mathrm d\param <\infty$ and $\data = (\meas_1,\dots,\meas_M)$ be the data consisting of $M$ independent samples from $p(\meas)$. The function $\parden\in\pardensp$ is said to have
\begin{enumerate}[label=(\roman*)]
\item
the $\MMI = \MMI(\model,\pardensp)$ property for model $\model$ given $\pardensp$ if, for any compact set $\parsp_0\subseteq\parsp$ and any $\tilde\parden\in\pardensp$,
\begin{equation}
\label{equ:MMIproperty}
\lim_{\replications\to\infty} \left(\cI[\parden_0\, |\, \model^\replications] - \cI[\tilde\parden_0\, |\, \model^\replications]\right) \ge 0,
\end{equation}
where $\parden_0$ and $\tilde\parden_0$ denote the (renormalized) restrictions of $\parden$ and $\tilde\parden$ to $\parsp_0$;
\item
the $\MMIL(\data) = \MMIL(\model,\pardensp,\data,\gamma)$ property for model $\model$ given $\data$, $\pardensp$ and $\gamma>0$ if $\parden$ is a proper probability density and if, for any proper probability density $\tilde\parden\in\pardensp$,
\begin{equation}
\label{equ:MMIXproperty}
\lim_{\replications\to\infty} \Big( \left(\log L(\parden) + \gamma\cI[\parden\, |\, \model^\replications]\right) - \left(\log L(\tilde\parden) + \gamma\cI[\tilde\parden\, |\, \model^\replications] \right) \Big) \ge 0.
\end{equation}
\end{enumerate}
\end{definition}
Here, $\MMIL$ stands for ``maximizing (a trade-off between) missing information and log-likelihood".
Both definitions are only useful if the expected informations in \eqref{equ:MMIproperty} and \eqref{equ:MMIXproperty} are finite. 
This can be guaranteed by restricting ourselves to some convenient class $\pardensp$ of admissible priors. Typically, one requires strict positivity and continuity of the priors as well as the existence of proper posteriors, see \citet[Section 3.3]{berger2009formal}, but different choices of $\pardensp$ are also thinkable.

\subsection{The Formal Definition of Empirical Reference Priors}
Similar to \citet{berger2009formal}, and in accordance with their definition of reference priors, we now define empirical reference priors, which constitute the main contribution of this manuscript:
\begin{definition} 
\label{def:EmpiricalReferencePriors}
A function $\parden_{\textup{ref}}(\param) = \parden_{\textup{ref}}(\param\, |\, \model, \pardensp)$ is a \emph{reference prior} for model $\model$ given prior class $\pardensp$, if it is permissible and has the $\MMI$ property.
A probability density $\parden_{\textup{erp}}(\param) = \parden_{\textup{erp}}(\param\, |\, \model,\pardensp,\data,\gamma)$ is an \emph{empirical reference prior} for model $\model$ given prior class $\pardensp$, data $\data = (\meas_1,\dots,\meas_M)$ and smoothing parameter $\gamma>0$, if it has the $\MMIL(\data)$ property.
\end{definition}
Let us now formulate and prove the key properties of empirical reference priors.
\begin{theorem}[Invariance of the empirical reference prior]
\label{theorem:InvarianceERP}
The empirical reference prior is invariant under transformations of $\param$ and $\meas$ in the following sense:
\begin{align*}
\parden_{\textup{erp}}(\param\, |\, \model_\varphi,\pardensp_\varphi,\data,\gamma)
&=
\varphi_\ast\parden_{\textup{erp}}(\param\, |\, \model,\pardensp,\data,\gamma),
\\
\parden_{\textup{erp}}(\param\, |\, \model_\psi,\pardensp,\data_\psi,\gamma)
&=
\parden_{\textup{erp}}(\param\, |\, \model,\pardensp,\data,\gamma),
\end{align*}
where we adopted the notation from the definition of invariance and $\pardensp_\varphi := \{\varphi_\ast\parden,\ \parden\in\pardensp\}$.
\end{theorem}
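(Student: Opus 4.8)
\section*{Proof proposal}

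The plan is to reduce both assertions to the defining property of the empirical reference prior, namely the $\MMI(\data)$ property of Definition \ref{def:MMI}(ii). By definition $\parden_{\rm er}(\param\,|\,\model,\pardensp,\data,\gamma)$ is any proper density for which the limit in \eqref{equ:MMIXproperty} is nonnegative against every proper competitor $\tilde\parden\in\pardensp$, so it suffices to show that the relevant transformation (pushforward under $\varphi$, the identity under $\psi$) carries densities satisfying \eqref{equ:MMIXproperty} for the original problem to densities satisfying the analogous inequality for the transformed problem, and that it does so bijectively on the class of proper priors involved. The whole argument then rests on transporting the two functionals $\log L(\parden)$ and $\cI[\parden\,|\,\model^\replications]$ through the transformations, for which Theorem \ref{theorem:invarianceLandI} and Theorem \ref{theorem:ExpectedInformationConcaveInvariant} do essentially all the work. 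A key structural point I would use throughout is that the transformations act at each \emph{finite} replication number $\replications$, so one transports the bracketed quantity in \eqref{equ:MMIXproperty} termwise and only afterwards passes to the (possibly divergent) limit $\replications\to\infty$.

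For the reparametrization $\varphi$, set $\parden^\ast=\parden_{\rm er}(\param\,|\,\model,\pardensp,\data,\gamma)$ and record the two exact identities
\[
\cI[\varphi_\ast\parden\,|\,\model_\varphi^{\replications}]=\cI[\parden\,|\,\model^{\replications}],
\qquad
L(\varphi_\ast\parden\,|\,\model_\varphi,\data)=L(\parden\,|\,\model,\data),
\]
valid for every $\replications$ and every proper $\parden$. The first is Theorem \ref{theorem:invarianceLandI} applied to the replicated model, which is itself a model on $\meassp^{\replications}$ on which $\varphi$ still acts on the parameter, using that replication commutes with reparametrization, $(\model^{\replications})_\varphi=(\model_\varphi)^{\replications}$; the second holds because the prior predictive is unchanged, $p(\meas\,|\,\varphi_\ast\parden)=p(\meas\,|\,\parden)$, by the change-of-variables computation behind Theorem \ref{theorem:invarianceLandI} (since $\varphi$ leaves $\meassp$ and $\data$ untouched, the multiplicative constant here is exactly $1$). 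As $\parden\mapsto\varphi_\ast\parden$ is a properness-preserving bijection from $\pardensp$ onto $\pardensp_\varphi$, substituting these identities into \eqref{equ:MMIXproperty} shows that the bracketed difference for $(\varphi_\ast\parden^\ast,\varphi_\ast\tilde\parden)$ under $\model_\varphi$ coincides termwise with the difference for $(\parden^\ast,\tilde\parden)$ under $\model$. Hence $\varphi_\ast\parden^\ast$ has the $\MMI(\data)$ property for $(\model_\varphi,\pardensp_\varphi,\data,\gamma)$, which is the first claimed equality.

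For the measurement transformation $\psi$ the parameter space and the prior class are untouched, so I would show that the \emph{same} density $\parden^\ast$ has the $\MMI(\data)$ property for $(\model_\psi,\pardensp,\data_\psi,\gamma)$. Again $\cI$ is exactly invariant, $\cI[\parden\,|\,\model_\psi^{\replications}]=\cI[\parden\,|\,\model^{\replications}]$, this time applying Theorem \ref{theorem:invarianceLandI} to $\model^{\replications}$ with $\psi$ extended coordinatewise to a diffeomorphism of $\meassp^{\replications}$. The point requiring care is that the marginal likelihood is now invariant only up to a multiplicative constant: from \eqref{equ:psiTransform}, $p(\tilde\meas_m\,|\,\parden)=\lvert\det(\psi^{-1})'(\tilde\meas_m)\rvert\,p(\meas_m\,|\,\parden)$, whence
\[
\log L(\parden\,|\,\model_\psi,\data_\psi)=\log L(\parden\,|\,\model,\data)+\log C_\psi,
\qquad
C_\psi=\prod_{m=1}^{M}\lvert\det(\psi^{-1})'(\tilde\meas_m)\rvert .
\]
The decisive observation is that $C_\psi$ depends only on $\psi$ and the data, not on the prior, so the additive term $\log C_\psi$ cancels in the difference appearing in \eqref{equ:MMIXproperty}. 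Consequently the bracketed quantity for $(\parden^\ast,\tilde\parden)$ under $\model_\psi$ equals the one under $\model$, its limit is nonnegative, and $\parden^\ast$ has the $\MMI(\data)$ property for the transformed problem, giving the second equality.

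The main obstacle I anticipate is conceptual rather than computational, and splits into two parts. First, one must verify precisely that the ``multiplicative constant'' in the near-invariance of $L$ is prior-independent, since this is exactly what makes it harmless inside the additive, prior-comparing difference of \eqref{equ:MMIXproperty}; this is where the measurement case genuinely differs from the reparametrization case (where the constant is $1$). Second, the theorem is phrased as an equality of specific densities, whereas $\parden_{\rm er}$ is defined only through a property, so strictly the argument above establishes that the transformation is a bijection between the \emph{sets} of empirical reference priors of the two problems, realized by $\varphi_\ast$ (respectively the identity). To read this as the stated pointwise equality one must either invoke uniqueness of the $\MMI(\data)$-maximizer within $\pardensp$ (plausible from the concavity of $\cI$ in $\parden$ noted in Theorem \ref{theorem:ExpectedInformationConcaveInvariant} together with the concavity of $\log L$) or interpret the equalities modulo that nonuniqueness; I would state this caveat explicitly rather than hide it.
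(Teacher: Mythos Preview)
Your proposal is correct and follows precisely the paper's own route: the paper's proof is the single sentence ``This is a direct consequence of Theorems \ref{theorem:invarianceLandI} and \ref{theorem:ExpectedInformationConcaveInvariant},'' and what you have written is exactly the unpacking of that sentence, transporting $\log L$ and $\cI[\,\cdot\,|\,\model^{\replications}]$ through $\varphi$ and $\psi$ and noting that the prior-independent constant in $L$ cancels in the difference \eqref{equ:MMIXproperty}. Your final caveat about uniqueness versus set-level invariance is a point the paper does not address, but it is a reasonable observation rather than a gap in your argument.
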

\begin{proof}
Since $\log L(\parden)$ is invariant under transformations of $\param$ and $\meas$ up to an additive constant by Theorem \ref{theorem:invarianceLandI}, this is a direct consequence of Theorems \ref{theorem:invarianceLandI} and \ref{theorem:ExpectedInformationConcaveInvariant}.
\end{proof}

\begin{theorem}[Compatibility with sufficient statistics]
If the model $\model = \big\{ p(\meas|\param),\,  \meas\in\meassp,\, \param\in\parsp\big\}$ has a sufficient statistic $t = t(x)\in\mathcal T$, then
\[
\parden_{\textup{erp}}(\param\, |\, \model,\pardensp,\data,\gamma)
=
\parden_{\textup{erp}}(\param\, |\, \model_t,\pardensp,T,\gamma),
\]
where $T = t(X)\in\mathcal T^M$ and $\model_t = \big\{ p(t|\param),\,  t\in\mathcal T,\, \param\in\parsp_0\big\}$ is the corresponding model in terms of $t$.
\end{theorem}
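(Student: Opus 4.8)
The plan is to reduce the $\MMI(\data)$ objective for $\model$ to the one for $\model_t$ up to an additive constant that does not depend on the prior, so that the defining inequality \eqref{equ:MMIXproperty} becomes literally the same object for the two models. By Definitions \ref{def:MMI} and \ref{def:EmpiricalReferencePriors} it suffices to show that for every pair of proper $\parden,\tilde\parden\in\pardensp$ and every $\replications$,
\[
\big(\log L(\parden\,|\,\model,\data) + \gamma\,\cI[\parden\,|\,\model^\replications]\big) - \big(\log L(\tilde\parden\,|\,\model,\data) + \gamma\,\cI[\tilde\parden\,|\,\model^\replications]\big)
\]
coincides with the same expression with $(\model,\data)$ replaced by $(\model_t,T)$; then the $\replications\to\infty$ limits agree and $\parden$ has the $\MMI(\data)$ property for one model exactly when it has it for the other.

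First I would invoke the factorization theorem for the sufficient statistic $t$, writing $p(\meas|\param) = p(\meas|t)\,p(t|\param)$ with $p(\meas|t)$ independent of $\param$ (here $t=t(\meas)$). Integrating against any prior yields $p(\meas|\parden) = p(\meas|t)\,p(t|\parden)$, so the marginal log-likelihood splits as
\[
\log L(\parden\,|\,\model,\data) = \sum_{m=1}^{M}\log p(\meas_m|t_m) + \log L(\parden\,|\,\model_t,T),
\]
where $t_m=t(\meas_m)$ and the first sum is independent of $\parden$. Hence the likelihood parts of the two objectives differ only by this prior-independent constant, which cancels in the bracketed difference above.

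Next I would establish $\cI[\parden\,|\,\model^\replications]=\cI[\parden\,|\,\model_t^\replications]$ for every $\replications$. The key observation is that in the integrand of $\cI$ the log-ratio collapses under the factorization,
\[
\frac{p(\meas|\param)}{p(\meas|\parden)} = \frac{p(t|\param)}{p(t|\parden)},
\]
so it depends on $\meas$ only through $t$. Pushing the inner $\meas$-integral forward under $t$ — the change-of-variables step already used in Theorem \ref{theorem:invarianceLandI} — turns $\int_\meassp p(\meas|\param)(\cdots)\,\mathrm d\meas$ into $\int_{\mathcal T} p(t|\param)(\cdots)\,\mathrm d t$, producing exactly $\cI[\parden\,|\,\model_t]$. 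For the $\replications$-fold product the same argument applies, since given $\param$ the coordinates $t^{(i)}=t(\meas^{(i)})$ are i.i.d.\ and $(t^{(1)},\dots,t^{(\replications)})$ is sufficient for $\model^\replications$; the replicated factorization gives the analogous collapse and hence $\cI[\parden\,|\,\model^\replications]=\cI[\parden\,|\,\model_t^\replications]$. This is just the invariance of mutual information under sufficient statistics.

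Combining the two steps, the $\replications$-th term of the bracket in \eqref{equ:MMIXproperty} is identical for $\model$ and $\model_t$, so the limits agree and the $\MMI(\data)$ properties are equivalent; by Definition \ref{def:EmpiricalReferencePriors} the empirical reference priors coincide. The main obstacle is purely technical: justifying the factorization $p(\meas|\param)=p(\meas|t)\,p(t|\param)$ and the pushforward of the $\meas$-integral under the (generally non-injective) map $t$ in the continuous setting, and checking that the prior-independent constant $\sum_m\log p(\meas_m|t_m)$ is finite so that it legitimately cancels. The conceptual content — two clean factorizations forcing the two objectives to agree — is otherwise immediate.
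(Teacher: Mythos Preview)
Your proof is correct and follows essentially the same route as the paper: the factorization $p(\meas|\param)=p(\meas|t)\,p(t|\param)$ yields the additive, $\parden$-independent constant in $\log L$, and the equality $\cI[\parden\,|\,\model^\replications]=\cI[\parden\,|\,\model_t^\replications]$ is then invoked to conclude. The only difference is that the paper cites \cite[Theorem 5]{berger2009formal} for the expected-information identity, whereas you sketch the log-ratio collapse and pushforward argument directly; your version is thus slightly more self-contained but otherwise identical in structure.
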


\begin{proof}
Since $t$ is a function of $x$ and a sufficient statistic for $\param$, we obtain
\[
p(\meas|\param)
=
p(\meas,t(\meas)|\param)
=
p(\meas|t(\meas),\param)\, p(t(\meas)|\param)
=
p(\meas|t(\meas))\, p(t(\meas)|\param).
\]
This implies that the marginal log-likelihoods $\log L(\pi)$ and $\log L_t(\pi)$ agree up to an additive constant, where $L_t(\parden) = \prod_{m=1}^{M} p(t_m|\pi),\ t_m := t(x_m)$, denotes the marginal likelihood in terms of $t$:
\[
\log L(\parden)
=
\sum_{m=1}^{M} \log\int p(\meas_m|\param)\, \parden(\param)\, \mathrm d\param
=
\sum_{m=1}^{M} \log p(\meas_m|t_m)
+
L_t(\parden).
\]
Since the expected information is also invariant under such transformations, $\cI[\parden\, |\, \model^\replications] = \cI[\parden\, |\, \model_t^\replications]$, see \citet[Theorem 5]{berger2009formal}, this proves the claim.
\end{proof}

\begin{remark}
	\label{rem:ResrictionProperty}
	Reference priors have the appealing property that their restrictions to any compact subset $\parsp_0$ coincide with the reference priors on $\parsp_0$, see \citet[Section 5]{berger2009formal}:
	\[
	\parden_{\textup{ref}}(\param\, |\, \model,\pardensp)\big|_{\parsp_0}
	=
	\parden_{\textup{ref}}(\param\, |\, \model_0,\pardensp_0),
	\quad
	\model_0 = \big\{ p(\meas|\param),\,  \meas\in\meassp,\, \param\in\parsp_0\big\},
	\quad 
	\pardensp_0 = \big\{ \parden\big|_{\parsp_0},\, \parden\in\pardensp \big\}.
	\]
	However, unlike for objective priors in the absence of data, this property is not desirable in the empirical Bayes framework, as explained in the example below, and usually will not be fulfilled by empirical reference priors. Therefore, a definition of $\MMIL(\data)$ using restrictions of possibly improper priors (as in the definition of $\MMI$) is not meaningful and we are forced to limit ourselves to proper priors.
	This limitation is not too restrictive since the aim of empirical Bayes methods is to approximate the true prior $\parden_{\true}(\param)$ and improper priors do not play a major role.
	For compact parameter spaces $\parsp$ (and in all other cases for which the reference prior turns out to be proper), empirical reference priors provide a meaningful generalization of reference priors, which then correspond to the case $M=0$, the absence of data $\data$.
\end{remark}

\begin{example}
	\label{ex:ResrictionProperty}
	Let the true data-generating prior be the uniform prior $\parden_{\true} \equiv \frac{1}{2}$ on $\parsp = [0,2]$, i.e.\ $\param\sim{\textup{Unif}}([0,2])$, and $\model$ be the location model given by $\meas|\param \sim \cN(\theta,0.5^2)$.
	For a `large' data set $\data$ consisting of $M = 100$ measurements, the empirical reference prior $\parden_{\textup{erp}}(\param\, |\, \model,\pardensp,\data)$ can be expected to provide a good approximation of $\parden_{\true}(\param)$, hence its restriction to $\parsp_0 = [0,1]$ will be approximately uniform, see Figure \ref{fig:restriction}. However, the empirical reference prior $\parden_{\textup{erp}}(\param\, |\, \model_0,\pardensp_0,\data)$ on $\parsp_0$ has to put much more weight on values close to $1$ in order to explain the many measurements $\meas_m$ which are larger than $1$.
	\\
	Hence, unlike for reference priors, the equality $\parden_{\textup{erp}}(\param\, |\, \model,\pardensp,\data)\big|_{\parsp_0} = \parden_{\textup{erp}}(\param\, |\, \model_0,\pardensp_0,\data)$ is neither fulfilled nor desirable.
	Of course, in practice, the parameter space $\parsp$ should agree with (or at least include) the domain of the true prior in order to be consistent with the data generating distribution.	
	\begin{figure}[H]
		\centering
		\includegraphics[width=0.6\textwidth]{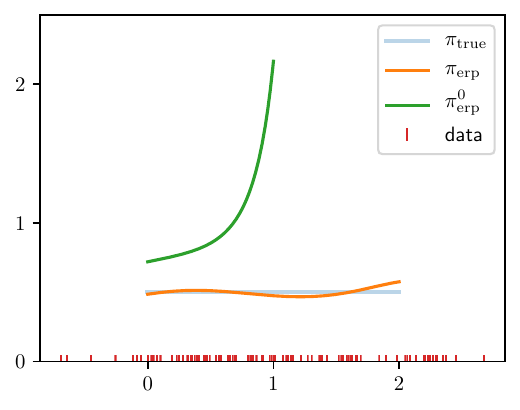}
		\caption{Inconsistency of the empirical reference prior under restrictions of the parameter space: $\parden_{\textup{erp}} := \parden_{\textup{erp}}(\param\, |\, \model,\pardensp,\data)\big|_{\parsp_0}$ restricted to $\parsp_0 = [0,1]$ does not agree with $\parden_{\textup{erp}}^0 := \parden_{\textup{erp}}(\param\, |\, \model_0,\pardensp_0,\data)$. Instead, $\parden_{\textup{erp}}^0$ puts a lot of weight close to the right boundary in order to explain all the data points larger than $1$. Unlike in the case of reference priors, this disagreement is reasonable in the presence of data. For simplicity, we chose $\gamma = 1$ in this example.}
		\label{fig:restriction}
	\end{figure}
\end{example}

\subsection{Choice of the Smoothing Parameter $\gamma$}
\label{section:ChoiceSmoothingParameter}
So far, it is completely unclear how the smoothing parameter $\gamma>0$ should be chosen. In fact, even entirely different ways of performing the trade-off between $L(\parden)$ and $\cI[\parden\, |\, \model^\infty]$ are thinkable. All theoretical results remain unchanged if we replace
\[
\log L(\parden) + \gamma\cI[\parden\, |\, \model^\replications]
\qquad
\text{by}
\qquad
\Psi(\log L(\parden),\cI[\parden\, |\, \model^\replications])
\]
in \eqref{equ:MMIXproperty}, where $\Psi\colon\R^2\to\R$ can be any concave function that is monotonically increasing in both arguments.
We will stick to the former formulation and choose $\gamma$ via likelihood cross-validation \citep[equation (3.43)]{silverman2018density},
\begin{equation}
\label{equ:CrossValidation}
\gamma_\ast
=
\argmax_{\gamma} \sum_{m=1}^{M} \log p\big( x_m\, |\, \parden_{\textup{erp}}(\param\, |\, \model, \pardensp, \data_{-m},\gamma) \big),
\end{equation}
where $X_{-m}\coloneqq \{\meas_{m'}\in \data \mid m'\neq m \}$ denotes the data set $X$ with the $m$-th point $x_m$ left out.
\begin{proposition}
\label{equ:InvarianceSmoothingParameter}
The smoothing parameter \eqref{equ:CrossValidation} is invariant under transformations of $\param$ and $\meas$, as long as the class $\pardensp$ of admissible priors is transformed accordingly, $\pardensp_\varphi = \{\varphi_\ast\parden,\ \parden\in\pardensp\}$.
\end{proposition}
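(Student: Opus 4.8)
The plan is to show that under either transformation the cross-validation objective in \eqref{equ:CrossValidation},
\[
G(\gamma) := \sum_{m=1}^{M} \log p\big( x_m\, |\, \parden_{\rm er}(\param\, |\, \model, \pardensp, \data_{-m},\gamma) \big),
\]
changes only by an additive term that does not depend on $\gamma$. Since the $\argmax$ over $\gamma$ is insensitive to additive constants, the maximizing set — and hence $\gamma_\ast$ — is left unchanged. The whole argument therefore reduces to tracking how a single leave-one-out summand $\log p(x_m\,|\,\parden_{\rm er}(\param\,|\,\model,\pardensp,\data_{-m},\gamma))$ behaves under $\varphi$ and under $\psi$, and I would treat the two cases separately.

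For a reparametrization $\varphi$ the two needed ingredients are already available. First, Theorem \ref{theorem:InvarianceERP}, applied to the reduced data set $\data_{-m}$ (the theorem holds for arbitrary data), gives $\parden_{\rm er}(\param\,|\,\model_\varphi,\pardensp_\varphi,\data_{-m},\gamma) = \varphi_\ast\parden_{\rm er}(\param\,|\,\model,\pardensp,\data_{-m},\gamma)$. Second, the prior predictive $p(x\,|\,\parden) = \int p(x|\param)\parden(\param)\,\mathrm d\param$ is invariant under pushforward of the prior: evaluating it in $\model_\varphi$ with prior $\varphi_\ast\parden$ and applying the change of variables $\tilde\param = \varphi(\param)$ returns $p(x|\parden)$ exactly (this is the $\varphi$-part of Theorem \ref{theorem:invarianceLandI}, where no Jacobian acts on the measurement side). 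Combining the two, each summand of $G(\gamma)$ is literally unchanged, so the objective — and therefore $\gamma_\ast$ — is identical in both parametrizations.

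For a measurement transformation $\psi$ the prior is not transformed at all: Theorem \ref{theorem:InvarianceERP} yields $\parden_{\rm er}(\param\,|\,\model_\psi,\pardensp,(\data_\psi)_{-m},\gamma) = \parden_{\rm er}(\param\,|\,\model,\pardensp,\data_{-m},\gamma)$, where I use that leaving out the $m$-th point commutes with the pointwise action $\tilde x_m = \psi(x_m)$, i.e. $(\data_\psi)_{-m} = (\data_{-m})_\psi$. The predictive density, however, picks up the Jacobian from \eqref{equ:psiTransform}: integrating that relation against any prior $\parden$ gives $p(\tilde x_m\,|\,\parden) = |\det(\psi^{-1})'(\tilde x_m)|\cdot p(x_m\,|\,\parden)$, since $\psi^{-1}(\tilde x_m) = x_m$. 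Taking logs, the transformed summand equals the original summand plus $\log|\det(\psi^{-1})'(\tilde x_m)|$. This is the crux of the proof: that extra term depends on $\psi$ and on the data but \emph{not} on $\gamma$, so summing over $m$ merely shifts $G(\gamma)$ by a $\gamma$-independent constant, leaving the maximizer unchanged.

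The only points that need care are bookkeeping rather than substance. One should confirm that Theorem \ref{theorem:InvarianceERP} is genuinely applicable to each leave-one-out subproblem (it is, since nothing there restricts the size or content of $\data$), and that the $\argmax$ is interpreted as a set when the maximizer is non-unique, so that invariance of $\gamma_\ast$ means equality of the maximizing sets; with a fixed tie-breaking rule the chosen value then coincides as well. I expect the main obstacle to be purely expository — stating cleanly that the Jacobian factor is constant in $\gamma$ and may be discarded from the $\argmax$ — rather than any analytic difficulty.
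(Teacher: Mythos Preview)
Your proposal is correct and follows essentially the same route as the paper: invoke Theorem \ref{theorem:InvarianceERP} on each leave-one-out subproblem, observe that the predictive $p(x\,|\,\parden_{\rm er})$ is unchanged under $\varphi$ and picks up only the Jacobian factor $C(x)=|\det(\psi^{-1})'(\tilde x)|$ under $\psi$, and conclude since this factor is independent of $\gamma$. Your write-up is in fact more careful than the paper's, which glosses over the leave-one-out bookkeeping and the set-valued $\argmax$ issue you flag.
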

\begin{proof}
From \eqref{equ:phiTransform}, \eqref{equ:psiTransform} and \Cref{theorem:InvarianceERP} we obtain
\begin{align*}
p\big( x\, |\, \parden_{\textup{erp}}(\tilde\param\, |\, \model_\varphi,\pardensp_\varphi,\data,\gamma) \big)
&=
p \big( x\, |\, \parden_{\textup{erp}}(\param\, |\, \model,\pardensp,\data,\gamma) \big),
\\
p \big( \tilde x\, |\, \parden_{\textup{erp}}(\param\, |\, \model_\psi,\pardensp,\data_\psi,\gamma) \big)
&=
C(\tilde x)\, p \big( x = \psi^{-1}(\tilde x)\, |\, \parden_{\textup{erp}}(\param\, |\, \model,\pardensp,\data,\gamma)\big),
\end{align*}
where we adopted the notation from the definition of invariance
and $C(x) = \left|\det(\psi^{-1})'(\tilde x)\right| > 0$ does not depend on $\gamma$.
This proves the claim.
\end{proof}
\subsection{Empirical Reference Priors under Asymptotic Normality}
\label{section:AsymptoticNormality}
As proven in \citet{clarke1994jeffreys}, the reference prior coincides with the Jeffreys prior \citep{jeffreys1946invariant,zbMATH03189754},
\begin{equation}
\label{equ:JeffreysPrior}
\parden^J(\param)\, \propto\, J(\param) := \sqrt{|\det i(\param)|},
\qquad
i(\param)
:=
\int_{\meassp} p(\meas|\param)\, \big(\nabla_\param\log p(\meas|\param)\big) \big(\nabla_\param\log p(\meas|\param)\big)^{\intercal}\, \mathrm d\meas,
\end{equation}
under certain regularity conditions, which basically ensure asymptotic posterior normality. Let us recall the basic results which we state for any dimension $d\in\N$ of the parameter space $\parsp$.
\begin{condition}
\label{cond:AsymptoticNormality}
The likelihood $p(\meas|\param)$ is twice continuously differentiable in $\param$ for almost every $\meas\in\meassp$. There exists $\epsilon>0$ and for every $\param$ there exists $\delta>0$ such that for all $j,k=1,\dots,d$ the functions
\[
\E\left[\abs{ \frac{\partial}{\partial\param_j} \log p(\meas|\param) }^{2+\epsilon}  \right]
\qquad
\text{and}
\qquad
\E\left[\sup_{\{\param : \|\param-\param'\|<\delta\}}\, \abs{ \frac{\partial^2}{\partial\param_j'\partial\param_k'}\log p(\meas|\param') }^2   \right]
\]
are finite and continuous in $\param$. The Fisher information matrix $i(\param)$ defined by \eqref{equ:JeffreysPrior} is positive definite for each $\param\in\parsp$ and equals $\tilde i(\param) = - \E\left[D_\param^2 \log p(\meas|\param)\right]$, where $D_\param^2$ denotes the Hessian matrix with respect to $\param$. The model $\model$ is identifiable as defined by \eqref{equ:ParIdentifiability}, the parameter space $\parsp$ is compact and $\pardensp$ consists of all positive and continuous probability density functions on $\parsp$ (this implies that $\pardensp$ is a convex set).
\end{condition}

\begin{proposition}
Under Condition \ref{cond:AsymptoticNormality}, there exist positive constants  $C_1,C_2>0$ such that, as $\replications\to\infty$,
\begin{equation}
\label{equ:AsymptoticMutualInformation}
\cI[\parden\, |\, \model^\replications]
=
C_1\log(C_2k) - \dkl{\parden}{J} + o(1),
\qquad
J(\param) = \sqrt{|\det i(\param)|}.
\end{equation}
\end{proposition}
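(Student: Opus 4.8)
The plan is to recognize this as the classical Clarke--Barron asymptotic expansion of mutual information and to obtain it through a Laplace approximation of the prior predictive density combined with the asymptotic normality that Condition~\ref{cond:AsymptoticNormality} is built to guarantee. I would begin from the symmetric (observation-side) form of the expected information,
\[
\cI[\parden|\model^\replications] = \int_\parsp \parden(\param) \int_{\meassp^\replications} p(\vec\meas|\param)\, \log\frac{p(\vec\meas|\param)}{p(\vec\meas|\parden)}\, \mathrm d\vec\meas\, \mathrm d\param,
\]
so that everything reduces to controlling the log-ratio $\log\frac{p(\vec\meas|\param)}{p(\vec\meas|\parden)}$ under the data-generating law $p(\vec\meas|\param)$, averaged over $\param\sim\parden$.

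The decisive step is to approximate the denominator $p(\vec\meas|\parden) = \int_\parsp p(\vec\meas|\param')\parden(\param')\,\mathrm d\param'$ by Laplace's method. Under the smoothness and moment bounds of Condition~\ref{cond:AsymptoticNormality}, the integrand concentrates near the maximum likelihood estimate $\hat\param = \hat\param(\vec\meas)$, and the Hessian of $-\log p(\vec\meas|\cdot)$ there is approximately $\replications\, i(\hat\param)$ (observed information $\approx \replications$ times expected information). This yields
\[
p(\vec\meas|\parden) \approx p(\vec\meas|\hat\param)\, \parden(\hat\param)\, \Big(\frac{2\pi}{\replications}\Big)^{d/2}\frac{1}{\sqrt{\det i(\hat\param)}},
\]
and therefore
\[
\log\frac{p(\vec\meas|\param)}{p(\vec\meas|\parden)} \approx \log\frac{p(\vec\meas|\param)}{p(\vec\meas|\hat\param)} - \log\parden(\hat\param) + \frac{d}{2}\log\frac{\replications}{2\pi} + \frac12\log\det i(\hat\param).
\]

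I would then take expectations term by term. Consistency of the MLE gives $\log\parden(\hat\param)\to\log\parden(\param)$ and $\log\det i(\hat\param)\to\log\det i(\param)$ by continuity, while the asymptotic normality of $\sqrt{\replications}(\hat\param-\param)$ toward $\cN(0,i(\param)^{-1})$, together with a second-order Taylor expansion of the log-likelihood about $\hat\param$, turns the leading term into $-\tfrac12\,\replications(\hat\param-\param)^{\intercal} i(\param)(\hat\param-\param)$, an asymptotically $\chi^2_d$ quantity whose expectation converges to $-d/2$. Integrating the resulting pointwise limit against $\parden(\param)\,\mathrm d\param$ produces
\[
\cI[\parden|\model^\replications] = \frac{d}{2}\log\frac{\replications}{2\pi} - \frac{d}{2} - \int_\parsp \parden(\param)\log\parden(\param)\,\mathrm d\param + \frac12\int_\parsp \parden(\param)\log\det i(\param)\,\mathrm d\param + o(1).
\]
Recognizing $-\int\parden\log\parden + \int\parden\log\sqrt{\det i} = -D_{\rm KL}(\parden\,\|\,J)$ and $\frac{d}{2}\log\frac{\replications}{2\pi}-\frac{d}{2} = \frac{d}{2}\log\frac{\replications}{2\pi e}$ then identifies the constants as $C_1 = d/2$ and $C_2 = 1/(2\pi e)$.

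The main obstacle is rigor: upgrading the pointwise Laplace and Taylor heuristics to genuine convergence of the integrals. The delicate point is dominating the error uniformly in $\vec\meas$ and $\param$, in particular controlling the large-deviation event that $\hat\param$ lands far from $\param$ and justifying the interchange of the $\replications\to\infty$ limit with both integrations. This is exactly what Condition~\ref{cond:AsymptoticNormality} supplies: the $(2+\epsilon)$-moment bound on the score and the locally uniform square-integrable bound on the Hessian furnish the uniform integrability needed for the quadratic term, positive definiteness of $i(\param)$ keeps the Laplace approximation non-degenerate, and compactness of $\parsp$ with continuity and positivity of $\parden$ render the $\param$-integrals finite and the convergence uniform. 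A clean fallback, since these hypotheses are tailored to match theirs, is to invoke the Clarke--Barron theorem \cite{clarke1994jeffreys} directly to package the technical estimates.
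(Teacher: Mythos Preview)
Your proposal is correct and in fact goes well beyond the paper's own proof, which consists solely of the citation ``See \cite{clarke1994jeffreys}.'' Your Laplace/asymptotic-normality sketch is precisely the mechanism behind the Clarke--Barron result, and your closing fallback of invoking that theorem directly is exactly what the paper does.
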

\begin{proof}
See \citet{clarke1994jeffreys}.
\end{proof}

Since the first term on the right-hand side of \eqref{equ:AsymptoticMutualInformation} does not depend on the prior $\parden(\param)$ and the second one is independent of $k$,
the reference prior coincides with Jeffreys prior $\parden_J$ and
the definition of empirical reference priors recovers the form of the MPLE \eqref{equ:defMPLE}:

\begin{corollary}
Under Condition \ref{cond:AsymptoticNormality}, Jeffreys prior $\parden_J$ is the unique reference prior (up to scaling).
\end{corollary}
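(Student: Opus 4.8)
The plan is to combine the asymptotic expansion of the expected information from the preceding Proposition with the defining $\MMI$ property, thereby reducing the characterization of the reference prior to a Kullback--Leibler minimization that is resolved by Gibbs' inequality. Since Condition~\ref{cond:AsymptoticNormality} guarantees that $\parsp$ is compact and that $i(\param)$ is positive definite and continuous, the map $J(\param) = \sqrt{|\det i(\param)|}$ is continuous and strictly positive on $\parsp$; hence $Z := \int_\parsp J(\param)\,\mathrm d\param \in (0,\infty)$ and Jeffreys prior $\parden_J = J/Z$ is a well-defined, strictly positive, continuous probability density on $\parsp$, so that $\parden_J\in\pardensp$.

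First I would exploit compactness of $\parsp$ to dispense with the restriction step in the $\MMI$ property: taking $\parsp_0 = \parsp$ in \eqref{equ:MMIproperty}, a reference prior $\parden_{\rm ref}$ must satisfy, for every $\tilde\parden\in\pardensp$,
\[
\lim_{\replications\to\infty}\Big(\cI[\parden_{\rm ref}\,|\,\model^\replications] - \cI[\tilde\parden\,|\,\model^\replications]\Big) \ge 0.
\]
Substituting the expansion \eqref{equ:AsymptoticMutualInformation} into both terms, the divergent contribution $C_1\log(C_2\replications)$ cancels and the $o(1)$ remainders vanish in the limit, leaving
\[
D_{\rm KL}(\parden_{\rm ref}\,\|\,J) \le D_{\rm KL}(\tilde\parden\,\|\,J)\qquad\text{for all }\tilde\parden\in\pardensp.
\]
Thus any reference prior is a minimizer of $\parden\mapsto D_{\rm KL}(\parden\,\|\,J)$ over the admissible class $\pardensp$.

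Next I would convert this into a minimization against the normalized density $\parden_J$. Writing $J = Z\,\parden_J$ yields the identity $D_{\rm KL}(\parden\,\|\,J) = D_{\rm KL}(\parden\,\|\,\parden_J) - \log Z$, so the additive constant $-\log Z$ is immaterial and $\parden_{\rm ref}$ equivalently minimizes $D_{\rm KL}(\cdot\,\|\,\parden_J)$ over $\pardensp$. By Gibbs' inequality, $D_{\rm KL}(\parden\,\|\,\parden_J) \ge 0$ with equality if and only if $\parden = \parden_J$ almost everywhere; since $\parden_J$ itself lies in $\pardensp$ and attains the value $0$, it is the unique minimizer. Hence $\parden_{\rm ref} = \parden_J$ as a probability density, which is exactly uniqueness up to scaling at the level of prior functions.

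It remains to confirm that $\parden_J$ genuinely qualifies as a reference prior, i.e. that it is permissible, so that existence and not merely uniqueness holds. On the compact space $\parsp$ a strictly positive continuous prior gives $\int_\parsp p(\meas|\param)\parden_J(\param)\,\mathrm d\param < \infty$ for each $\meas$, hence proper posteriors, and for any increasing exhaustion $\parsp_i\nearrow\parsp$ the normalized restrictions converge to $\parden_J$ in the required expected-logarithmic sense; both checks are routine on a compact domain. The main obstacle is the careful handling of the limit in the $\MMI$ inequality --- specifically, justifying that the divergent $\log$ term cancels cleanly and that the $o(1)$ remainders pass to the limit for every competitor $\tilde\parden$ --- together with verifying that $\parden_J$ indeed belongs to the admissible class $\pardensp$, which is precisely what turns the Gibbs-inequality bound into a genuine (attained, not merely infimal) minimum.
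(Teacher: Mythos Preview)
Your proposal is correct and follows essentially the same approach as the paper. The paper's entire argument is the single sentence preceding the corollary: since $C_1\log(C_2\replications)$ is independent of $\parden$ and $-D_{\rm KL}(\parden\,\|\,J)$ is independent of $\replications$, the $\MMI$ property reduces to minimizing $D_{\rm KL}(\parden\,\|\,J)$, whence $\parden_{\rm ref}=\parden_J$; you have simply spelled out the details (compactness, Gibbs' inequality, permissibility) that the paper leaves implicit. One minor point: in your final paragraph you verify permissibility but do not explicitly check that $\parden_J$ satisfies the $\MMI$ inequality for \emph{every} compact $\parsp_0\subseteq\parsp$, only for $\parsp_0=\parsp$; this is, however, routine, since the restriction $(\parden_J)_0$ is again the normalized Jeffreys density on $\parsp_0$ and therefore minimizes $D_{\rm KL}(\cdot\,\|\,J|_{\parsp_0})$ there as well.
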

\begin{corollary}
Under Condition \ref{cond:AsymptoticNormality}, the empirical reference prior is given by the the following MPLE \eqref{equ:defMPLE},
\begin{equation}
\label{equ:KullbackLeiblerPenaltyMPLE}
\parden_{\textup{erp}} = \argmax_{\parden\in\pardensp}\, \log L(\parden) - \gamma \Phi_\cI(\parden),
\qquad
\Phi_\cI(\parden) := \dkl{\parden}{J},
\end{equation}
where the penalty term $\Phi_\cI$ will be referred to as the \emph{missing information penalty}.
\end{corollary}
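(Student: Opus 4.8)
The plan is to substitute the asymptotic expansion \eqref{equ:AsymptoticMutualInformation} directly into the defining inequality of the $\MMI(\data)$ property and to observe that the only $\replications$-dependent contribution is prior-independent and hence cancels. Recall from Definition \ref{def:MMI}(ii) that a proper probability density $\parden_{\rm er}$ is an empirical reference prior precisely when, for every proper $\tilde\parden\in\pardensp$,
\[
\lim_{\replications\to\infty} \Big( \big(\log L(\parden_{\rm er}) + \gamma\cI[\parden_{\rm er}\, |\, \model^\replications]\big) - \big(\log L(\tilde\parden) + \gamma\cI[\tilde\parden\, |\, \model^\replications] \big) \Big) \ge 0 .
\]

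First I would apply the expansion \eqref{equ:AsymptoticMutualInformation} to both $\parden_{\rm er}$ and $\tilde\parden$. Condition \ref{cond:AsymptoticNormality} guarantees that every element of $\pardensp$ is positive and continuous on the compact space $\parsp$, so the expansion is valid for each of them with the \emph{same} constants $C_1,C_2$. Writing $\cI[\parden\,|\,\model^\replications] = C_1\log(C_2\replications) - D_{\rm KL}(\parden\,\|\,J) + o(1)$ for each prior and forming the difference, the divergent term $C_1\log(C_2\replications)$, which carries no dependence on the prior, cancels identically. What remains inside the limit is
\[
\big(\log L(\parden_{\rm er}) - \gamma D_{\rm KL}(\parden_{\rm er}\,\|\,J)\big) - \big(\log L(\tilde\parden) - \gamma D_{\rm KL}(\tilde\parden\,\|\,J)\big) + o(1) .
\]

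Since this expression is, apart from the $o(1)$ remainder, independent of $\replications$, the limit exists and equals the $\replications$-free quantity. The $\MMI(\data)$ inequality therefore holds for all proper $\tilde\parden\in\pardensp$ if and only if
\[
\log L(\parden_{\rm er}) - \gamma D_{\rm KL}(\parden_{\rm er}\,\|\,J) \ge \log L(\tilde\parden) - \gamma D_{\rm KL}(\tilde\parden\,\|\,J) ,
\]
i.e.\ if and only if $\parden_{\rm er}$ maximizes $\log L(\parden) - \gamma\,\Phi_\cI(\parden)$ with $\Phi_\cI(\parden) = D_{\rm KL}(\parden\,\|\,J)$ over the proper densities in $\pardensp$. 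This is exactly the MPLE characterization \eqref{equ:defMPLE}, which establishes \eqref{equ:KullbackLeiblerPenaltyMPLE}.

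The computation is immediate once the expansion is in hand, so I do not expect a genuine obstacle; the one point that deserves care is the legitimacy of cancelling the divergent $\log\replications$ term. This hinges on the constants $C_1,C_2$ in \eqref{equ:AsymptoticMutualInformation} being independent of the prior, so that the divergent parts are literally identical for $\parden_{\rm er}$ and $\tilde\parden$ and the difference of two individually divergent sequences equals the convergent quantity above. One should also confirm that subtracting two $o(1)$ remainders again yields an $o(1)$ term, which is clear, and that the maximization runs over the same class of admissible proper priors on both sides, so that the reduced variational problem is well posed.
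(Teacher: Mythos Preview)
Your proposal is correct and follows exactly the reasoning the paper uses: the corollary is stated without a separate proof, relying only on the observation just above it that the $C_1\log(C_2\replications)$ term in \eqref{equ:AsymptoticMutualInformation} is prior-independent while the Kullback--Leibler term is $\replications$-independent, so the $\MMI(\data)$ condition reduces to the MPLE \eqref{equ:KullbackLeiblerPenaltyMPLE}. Your write-up simply makes this cancellation explicit.
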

Theorems \ref{theorem:invarianceLandI} and \ref{theorem:ExpectedInformationConcaveInvariant}
imply favorable properties of the missing information penalty $\Phi_\cI$ and, in particular, the existence of a unique empirical reference prior that is invariant under transformations of $\param$ and $\meas$.
\begin{corollary}
\label{corollary:StrictConvexityUniqueSolution}
Under Condition \ref{cond:AsymptoticNormality}, the optimization problem \eqref{equ:KullbackLeiblerPenaltyMPLE} is strictly concave in $\parden$ and invariant under transformations of $\param$ and $\meas$. Hence, since $\pardensp$ is convex, it has a unique solution $\parden_{\textup{erp}}$.
\end{corollary}
\begin{proof}
For the concavity of $\log L(\parden)$ see \citet[Section 5.1.3]{lindsay1995mixture}, while the strict convexity of $\Phi_{\cI}$ follows from \citet[Theorem 11]{van2014renyi}. Invariance is a direct consequence of Theorems \ref{theorem:invarianceLandI} and \ref{theorem:ExpectedInformationConcaveInvariant}.
\end{proof}
In order to realize the optimization formulation \eqref{equ:KullbackLeiblerPenaltyMPLE}, we now compute analytically the gradient of the functional to be maximized.
In addition, we characterize $\parden_{\textup{erp}}$ as the unique fixed point of a certain function $F^{\ast}\colon \pardensp \to \pardensp$, which motivates the fixed point iteration in \Cref{algorithm:FixedPointERP}.
\begin{theorem}
	\label{thm:FixpointFormulationForERP}
	Let \Cref{cond:AsymptoticNormality} hold. Then the gradient of the functional in \eqref{equ:KullbackLeiblerPenaltyMPLE},
	\begin{equation}
	\label{equ:PsiFunctionalToOptimize}
	\Psi(\parden)
	=
	\log L(\parden) - \gamma \Phi_\cI(\parden)
	=
	\sum_{m=1}^{M} \log  \int_{\parsp} p(\meas_{m}|\param)\, \parden(\param)\, \mathrm d\param  - \gamma \dkl{\parden}{J},
	\end{equation}
	with respect to $\langle \Cdot,\Cdot\rangle_{L^2(\parsp)}$ is given by
	\begin{equation}
	\label{equ:GradientPsi}
	(\nabla_{\parden} \Psi(\parden))(\param)
	=
	v^{\ast}(\param) - \int v^{\ast},
	\qquad
	v^{\ast}(\param)
	=	
	\sum_{m=1}^{M} \frac{p(\meas_{m}|\param)}{\int p(\meas_{m}|\param')\, \parden(\param')\, \mathrm d\param'} - \gamma \log \frac{\parden(\param)}{J(\param)}.
	\end{equation}	
	It follows that the empirical reference prior $\parden_{\textup{erp}}$ is the unique fixed point of $F^{\ast}\colon \pardensp \to \pardensp$,
	\begin{equation}
	\label{equ:FixpointFormulationForERP}
	F^{\ast}(\parden)
	\defeq
	\frac{F(\parden)}{\int (F(\parden))(\param)\, \mathrm d \param}\, ,
	\qquad
	F(\parden)
	\defeq
	J \, \exp\left(\gamma^{-1} \sum_{m=1}^{M} \frac{p(\meas_{m}|\Cdot)}{\int_{\parsp} p(\meas_{m}|\param')\, \parden(\param')\, \mathrm d \param' }\right).
	\end{equation}
\end{theorem}

\begin{proof}
	We consider perturbations of $\Psi(\parden)$ in arbitrary directions $v\in C(\parsp)$ with $\int v = 0$:
	\begin{align*}
		\frac{\mathrm d}{\mathrm d \varepsilon}\Psi(\parden + \varepsilon v)
		&=
		\sum_{m=1}^{M} \frac{\int p(\meas_{m}|\param)\, v(\param)\, \mathrm d\param}{\int p(\meas_{m}|\param')\, \parden(\param')\, \mathrm d\param'}
		\, -\, 
		\gamma \int v(\param)\, \log \frac{\parden(\param)}{J(\param)}\, \mathrm d\param
		\, -\, 
		\gamma \underbrace{\int v(\param)\, \mathrm d\param}_{=\, 0}
		\\
		&=
		\bigg\langle v \, ,\,  \sum_{m=1}^{M} \frac{p(\meas_{m}|\param)}{\int p(\meas_{m}|\param')\, \parden(\param')\, \mathrm d\param'} - \gamma \log \frac{\parden(\param)}{J(\param)} \bigg\rangle_{L^2(\parsp)},
	\end{align*}
	which proves \eqref{equ:GradientPsi}.
	The above inner product is zero for any $v\in C(\parsp)$ with $\int v = 0$ if and only if its second argument is constant in $\param$. Hence,	
	\[
	\parden_{\textup{erp}}(\param)
	\propto
	J(\param) \, \exp \bigg(\gamma^{-1} \sum_{m=1}^{M} \frac{p(\meas_{m}|\param)}{\int_{\parsp} p(\meas_{m}|\param')\, \parden(\param')\, \mathrm d \param' } \bigg),
	\]
	which proves the second claim. Note that $F^{\ast}\colon \pardensp \to \pardensp$, since we assumed the parameter space $\parsp$ to be compact, and that the uniqueness of the fixed point follows from the strict convexity of $\Psi$ (\Cref{corollary:StrictConvexityUniqueSolution}).
\end{proof}
\section{Practical Realization of the Empirical Reference Prior}
\label{section:Practical}

In this section we demonstrate how the empirical reference prior can be computed in the asymptotically normal case (i.e.\ under \Cref{cond:AsymptoticNormality}).
We suggest two algorithms, one of which is a straightforward optimization of the strictly concave functional in \eqref{equ:KullbackLeiblerPenaltyMPLE} and the other being a fixed point iteration, the convergence of which, however, is not yet fully understood.

For simplicity, we treat only the case where $\parsp = [a,b]$ is an interval and is discretized by the equidistant grid $G_K$ defined below.
Integrals over $\parsp$ will be approximated by the midpoint rule
\[
Q(f) \coloneqq \frac{b-a}{K} \sum_{k=1}^{K} f(\param_{k}),
\qquad
G_K = \Big\{ \param_{k}\defeq \frac{2k-1}{2K} \mid k=1,\dots,K \Big\}.
\]
Further, we assume that the Jeffreys prior $J(\param)$ can be evaluated pointwise in each $\param_{k}$, either because its analytic form is available or by a numerical approximation of the integral in \eqref{equ:JeffreysPrior}.
Finally, we consider the computation of $\parden_{\textup{erp}}(\param) = \parden_{\textup{erp}}(\param\, |\, \model, \pardensp, \data,\gamma)$ only for a given smoothing parameter $\gamma$.
The choice of $\gamma$ requires another optimization as discussed in \Cref{section:ChoiceSmoothingParameter}, this time in $\gamma$, where the algorithms below have to be executed many times for the computation of $\parden_{\textup{erp}}(\param\, |\, \model, \pardensp, \data_{-m},\gamma)$ with varying data sets $\data_{-m}$, see the likelihood cross-validation formula \eqref{equ:CrossValidation}.
While this ``brute force'' solution seems computationally challenging, the overall procedure could be performed in all our examples within just a few seconds (mainly due to the strict concavity of the optimization problem \eqref{equ:KullbackLeiblerPenaltyMPLE}).
A more elegant solution which combines the two optimization problems is imaginable, but goes beyond the scope of this paper. 
In the two algorithms below, we identify each prior $\parden$ with its discretized version $\hat{\parden} = (\parden(\param_{1}),\dots,\parden(\param_{K}))$ (and similar for $J$, $\parden_{t}$ etc.).
\begin{algorithm}
	\label{algorithm:OptimizationERP}	
	\ 
	\begin{enumerate}[label=(\roman*)]
		\item
		Formulate the discretized functional
		\[
		\hat{\Psi}(\parden)
		=
		\sum_{m=1}^{M} \log Q\big(p(\meas_m|\Cdot)\, \parden\big)
		- \gamma \, Q\big(\parden \, (\log \parden - \log J)\big)
		\]
		and its gradient in accordance with \eqref{equ:PsiFunctionalToOptimize} and \eqref{equ:GradientPsi}.		
		\item
		Optimize $\Psi(\parden)$ as a function of $\parden$ by an optimization algorithm of your choice (we use the method of moving asymptotes (MMA) of \citet{svanberg2002class} from the NLopt package provided by \citet{nlopt}). Renormalize $\hat{\parden}$ in each iteration step.
	\end{enumerate}
\end{algorithm}
Motivated by the fixed point characterization of $\parden_{\textup{erp}}$ in \Cref{thm:FixpointFormulationForERP}, it is tempting to implement the fixed point iteration $\parden_{t} = F^{\ast}(\parden_{t-1})$, $t\in\N$. However, our empirical studies showed that the resulting sequence $(\parden_{t})_{t\in\N}$ often fails to converge, indicating that $F^{\ast}$ is, in general, not a contraction.
This suggests to decelerate the iteration by choosing $\parden_{t} = (1-\tau_{t})\, \parden_{t-1} + \tau_{t} \,  F^{\ast}(\parden_{t-1})$ with step sizes $\tau_{t}\in [0,1]$.
While this gave satisfactory results in our examples, it is hard to characterize the step sizes for which the new sequence converges.
\begin{algorithm}
	\label{algorithm:FixedPointERP}
	Choose a threshold $\varepsilon>0$.
	\begin{enumerate}[label=(\roman*)]
		\item
		Choose an arbitrary positive probability density $\parden_{0}$, e.g.\ $\parden_{0}(\param) = (b-a)^{-1}$.
		\item
		For $t\in\N$, iterate:
		\begin{enumerate}[label=(\alph*)]
			\item compute
			$\quad \tilde \parden_{\textup{temp}}(\param)
			=
			\, J(\param) \, \exp\left(\gamma^{-1} \sum_{m=1}^{M} \frac{p(\meas_{m}|\param)}{Q\big(p(\meas_{m}|\Cdot)\, \parden_{t-1}\big)}\right)$;
			\item
			normalize: $\parden_{\textup{temp}} = \tilde\parden_{\textup{temp}} / Q(\tilde \parden_{\textup{temp}})$;
			\item choose a step size $\tau_{t}\in [0,1]$ and
			$\parden_{t} = (1-\tau_{t})\, \parden_{t-1} + \tau_{t} \, \parden_{\textup{temp}}$;
		\end{enumerate}
		until $Q(\abs{\parden_{\textup{temp}} - \parden_{t^{\ast}-1}}) < \varepsilon$ for some $t^{\ast}\in\N$.
		\item
		Set $\parden_{\textup{erp}} = \parden_{t^{\ast}}$.		
	\end{enumerate}
\end{algorithm}
\section{Numerical Computations}
\label{section:Numerical}

In the following, we apply the empirical reference prior approach to a synthetic data set as well as to the famous real-life data set of \citet{efron1975data}.
Our code implements \Cref{algorithm:OptimizationERP} and is available at \url{https://github.com/axsk/ObjectiveEmpiricalBayes.jl}
in the form of a Julia package.

\subsection{Example with synthetic data}
\label{section:NumericalSyntheticDataSet}
We illustrate the invariance under reparametrization using the location model
\begin{equation}
\label{equ:locModel}
\meas | \param \sim \mathcal{N}\left(\param, \sigma^2 \right),
\qquad
\param \sim \pi_{\true},
\end{equation}
with $\sigma = 0.3$ and $\pi_\text{true}$ being an equal mixture of $\mathcal{N}(1, 0.5^2)$ and $\mathcal{N}(3, 0.5^2)$, truncated to the interval $\parsp = [0,4]$.
We compare the performance of MPLE \eqref{equ:defMPLE} using Tikhonov regularization $\Phi= \left\| \cdot \right\|_{L^2}^{2}$, also known as ridge regression, and the empirical reference prior \eqref{equ:KullbackLeiblerPenaltyMPLE}, applied in both the untransformed space $\parsp = [0,4]$ as well as the space $\tilde \parsp$ transformed by $\varphi: \param \mapsto \tilde \param = \exp(\param)$.
\begin{figure}[H]
	\centering
	\begin{subfigure}[b]{0.44\textwidth}
		\centering	
		\includegraphics[width=\textwidth]{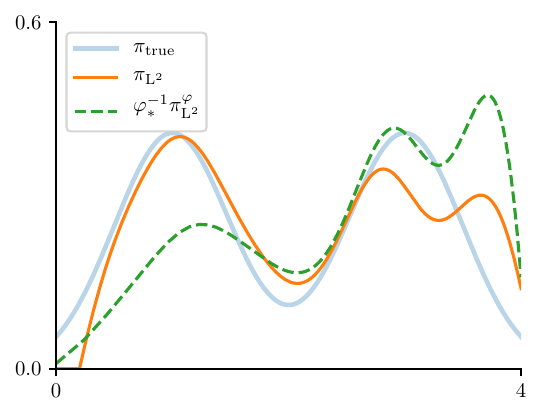}
	\end{subfigure}
	\hfill	
	\begin{subfigure}[b]{0.1\textwidth}		
		\centering
		\tikz{
			\draw [->] (-.8,3)  -- node[above] {$\varphi_\ast$}      (.8,3);
			\draw [->] (.8,2.5) -- node[below] {$\varphi^{-1}_\ast$} (-.8,2.5);
			\phantom{\draw [->] (-.8,0) -- (.8,0);}
		}
	\end{subfigure}
	\hfill
	\begin{subfigure}[b]{0.44\textwidth}
		\centering	                
		\includegraphics[width=\textwidth]{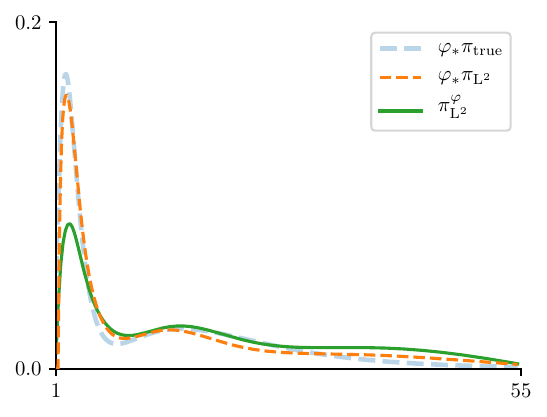}
	\end{subfigure}	
	\caption{Conventional penalty terms, here the Tikhonov-regularization, are variant under transformations $\varphi\colon \parsp\to\tilde\parsp$, resulting in inconsistent prior estimates (see also Figure \ref{fig:CommutativeDiagramInvariance}). If the estimation is performed in a transformed space $\tilde\parsp$ it gives a different estimate than the pushforward of the estimate in $\parsp$, $\parden^{\varphi}_{\textup{est}} \neq \varphi_\ast \parden_{\textup{est}}$. Dashed lines correspond to transformed densities.
		Here, $\pi_{L^2}$ denotes the MPLE using Tikhonov regularization as penalty. In order to demonstrate the lack of invariance of the density estimate, we chose the same smoothing parameter $\gamma$ in both spaces.
	}
	\label{fig:Tikhonov}
\end{figure}
\begin{figure}[H]
	\centering
	\begin{subfigure}[b]{0.44\textwidth}
		\centering	
		\includegraphics[width=\textwidth]{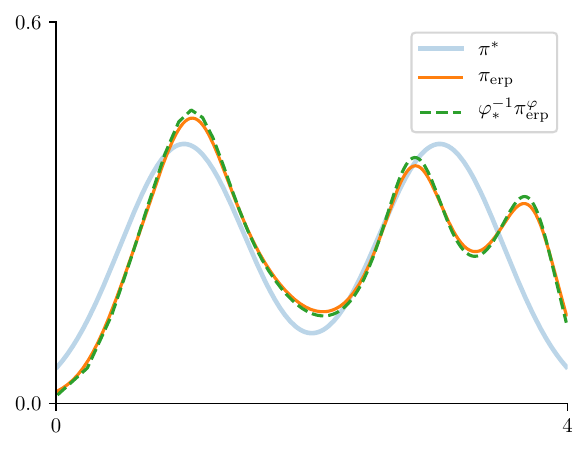}
	\end{subfigure}
	\hfill	
	\begin{subfigure}[b]{0.1\textwidth}		
		\centering
		\tikz{
			\draw [->] (-.8,3)  -- node[above] {$\varphi_\ast$}      (.8,3);
			\draw [->] (.8,2.5) -- node[below] {$\varphi^{-1}_\ast$} (-.8,2.5);
			\phantom{\draw [->] (-.8,0) -- (.8,0);}
		}
	\end{subfigure}
	\hfill
	\begin{subfigure}[b]{0.44\textwidth}
		\centering	                
		\includegraphics[width=\textwidth]{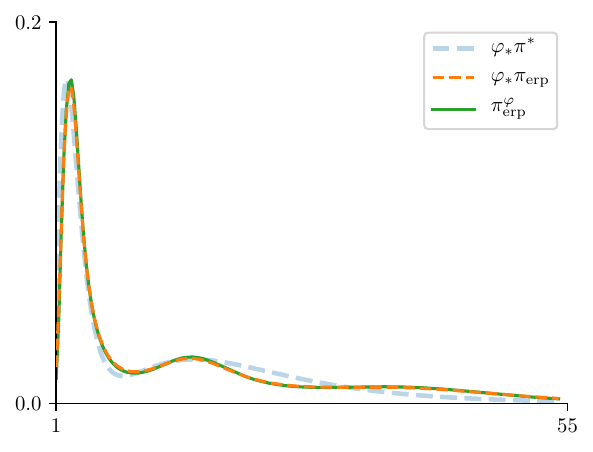}
	\end{subfigure}	
	\caption{Due to the transformation invariance of the missing information penalty $\Phi_\cI$, the empirical reference prior estimates are consistent under reparametrization (up to a negligible numerical error), $\parden_{\textup{erp}}^\varphi = \varphi_\ast\parden_{\textup{erp}}$.}
	\label{fig:piRef}	
\end{figure}
According to \eqref{equ:phiTransform}, the transformed model takes the form
\begin{equation}
\label{equ:locModelTransformed}
\meas | \tilde \param \sim \mathcal{N}(\log \tilde \param, \sigma^2 ),
\qquad
\tilde \param \sim \varphi_\ast\pi_\text{true}.
\end{equation}
Both optimization problems \eqref{equ:defMPLE} and \eqref{equ:KullbackLeiblerPenaltyMPLE} are solved by 	\Cref{algorithm:OptimizationERP} using the method of moving asymptotes (MMA) of \citet{svanberg2002class} from the NLopt package provided by \citet{nlopt}, after discretizing the parameter space with 200 equidistant grid points.
All priors are estimated using the same data consisting of $M=100$ synthetic measurements.
In both cases, the Jeffreys prior can be computed analytically. We obtain a constant Jeffreys prior for the likelihood model in \eqref{equ:locModel} and $J(\tilde \param) \propto |\tilde \param |^{-1}$ for the one in \eqref{equ:locModelTransformed}.

As expected from the theory in Section \ref{section:ObjectiveEmpirical}, we observe how the lack of invariance of conventional penalty terms is resolved by the missing information penalty $\Phi_\cI$, without losing the effect of regularization, see Figures \ref{fig:Tikhonov} and \ref{fig:piRef}.
In addition, the empirical reference prior is strictly positive on the whole interval, making $\Phi_\cI$ preferable to Tikhonov regularization from an objective Bayesian point of view: Excluding certain parameter values completely from a finite number of measurements appears unreasonable.

\subsection{Example with real-life data}
\label{section:NumericalRealLifeDataSet}
To illustrate our approach on real-life data, let us consider the historical batting averages (or baseball) example of \citet{efron1975data}, which is one of the most famous small data sets in statistics.
The batting averages $\meas_m$ (number of successful hits divided the number of tries) for $M=18$ major league baseball players early in the 1970 season --- to be precise, the first $N=45$ at bats of each --- are used to estimate their true success rates $\param_m \in \parsp = [0,1]$, $m=1,\dots,M$, which are taken as the averages over the remaining season (around 370 at bats for each player).
This example was used by \citet{efron1975data} to illustrate the strength of the James--Stein (JS) estimator $\hat{\param}_{\textup{JS}}$, a particular parametric empirical Bayes method, compared to the MLE $\param_{\MLE} = \meas$ in terms of the (empirical) mean squared error
\[
\MSE
\defeq
\frac{1}{M} \sum_{m=1}^{M} (\hat{\param}_{m} - \param_{m})^{2}.
\]
Since the JS estimator assumes both the prior and the likelihood model to be normal, the natural binomial likelihood model $\model^{\Bin}$ given by
\begin{align}
	\label{equ:BinomialModel}
	&(N\meas_{m}) | \param_{m} \stackrel{\text{independent}}{\sim} \Bin\left(\Cdot | \param_{m}, N \right),
	&&
	\param_{m} \stackrel{\text{i.i.d.}}{\sim} \parden_{\text{true}},
	&&
	m=1,\dots,M,
	\intertext{
		is replaced by its normal approximation $\model^{\cN}$,
	}
	\label{equ:NormalModel}
	&\meas_{m} | \param_{m} \stackrel{\text{independent}}{\sim} \mathcal{N}\left(\param_{m}, \sigma^2 \right),
	&&
	\param_{m} \stackrel{\text{i.i.d.}}{\sim} \parden_{\text{true}},
	&&
	m=1,\dots,M,
\end{align}
where $\sigma^{2} \defeq \overline{x}(1-\overline{x})/N$ chosen to approximate the variance estimate of the binomial distribution and $\overline{x}\defeq M^{-1}\sum_{m=1}^{M} x_m$.
Note that in the original paper by \citet{efron1975data} an additional arcsin transformation is applied to each of the measurements $x_m$ in order to stabilize their variance. For simplicity and following the presentation in \citet[Section 1.2]{efron2012large}, we omit this technical detail.
We will apply our empirical reference prior approach to both, the original formulation \eqref{equ:NormalModel} using $\model^{\cN}$ as well as the more meaningful likelihood model $\model^{\Bin}$ in \eqref{equ:BinomialModel}.
In both cases, the Jeffreys prior can be computed analytically: We obtain a constant Jeffreys prior for the likelihood model $\model^{\cN}$ in \eqref{equ:NormalModel} and $J(\param) \propto |\param (1-\param)|^{-1/2}$ for $\model^{\Bin}$ in \eqref{equ:BinomialModel}.
\begin{figure}[H]
	\centering
	\includegraphics[width=0.8\textwidth]{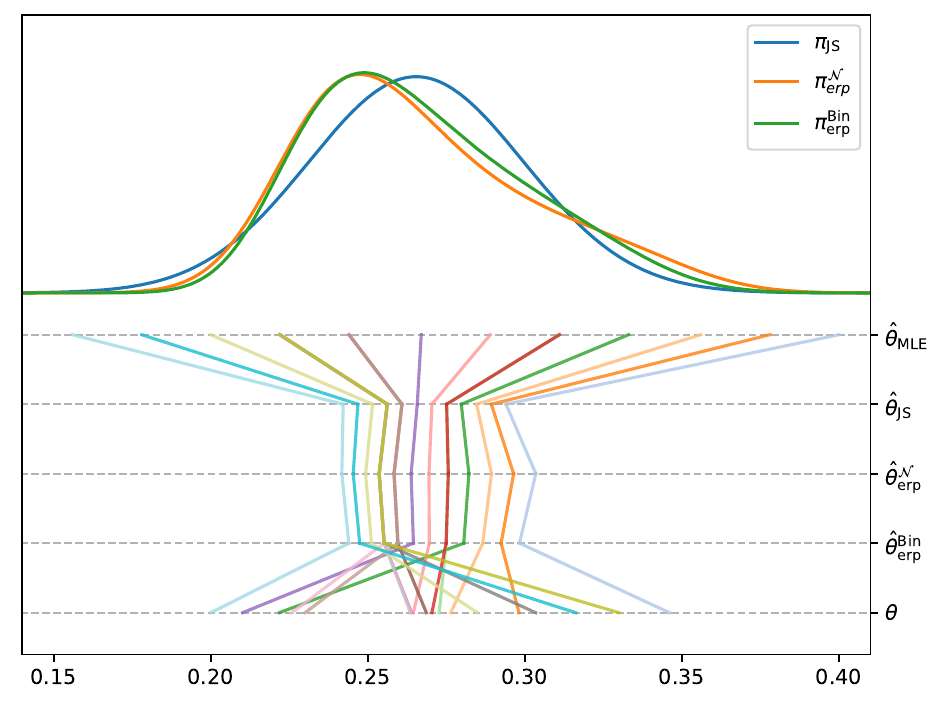}
	\caption{
		The empirical reference priors $\parden_{\textup{erp}}^{\cN}$ and $\parden_{\textup{erp}}^{\Bin}$ and the corresponding posterior mean estimates $\hat{\param}_{m}^{\textup{erp}}$ for the likelihood models given by \eqref{equ:BinomialModel} and \eqref{equ:NormalModel} compared to the maximum likelihood estimates $\hat{\param}_{m}^{\MLE}$ and James--Stein estimates $\hat{\param}_{m}^{\textup{JS}}$ (and prior $\parden_{\textup{JS}}$).
		While the results of the empirical reference prior approach and the JS approach are comparable, the normal prior $\parden_{\textup{JS}}$ clearly fails to be invariant under reparametrizations.
		Hence, the quality of the JS estimates could change drastically if, say, the players' performance was measured on a logarithmic scale.		
		\\		
		The ``shrinkage toward the mean'' effect of empirical Bayes estimates is well observable. The first impulse, that a smaller shrinkage would provide better results, is deceptive --- this would reduce the (squared) error for some players, but drastically increase the (squared) error for others, a modification that is likely to increase the overall MSE.		
	}
	\label{fig:Baseball}
\end{figure}
We compare these two approaches to the JS estimator which assumes a normal prior $\parden_{\text{true}} = \cN (\mu_{\param},\sigma_{\param}^{2})$, making it a \emph{parametric} empirical Bayes method with only the parameters $\mu_{\param}$ and $\sigma_{\param}^{2}$ of the prior to be estimated.
A wonderful derivation of the JS estimator and its application to the baseball dataset can be found in \citet[Section 1.2]{efron2012large}, with the resulting estimates
\[
\hat{\mu}_{\param}
=
\overline{x}\defeq \frac{1}{M}\sum_{m=1}^{M} x_m,
\qquad
\hat{\sigma}_{\param}^{2}
=
\frac{B\sigma^{2}}{1-B},
\qquad
\hat{\param}_{\textup{JS},m}
=
\overline{x} + B (x_m - \overline{x}),
\]
where, for $m=1,\dots,M$, the estimates $\hat{\param}_{\textup{JS},m}$ approximate $\param_{m}$ by the individual posterior means and
\[
B
\defeq
1-\frac{(M-3)\sigma^{2}}{\sum_{m=1}^{M}(x_{m} - \overline{x})^{2}}.
\]
The results are given in \Cref{table:Baseball}, while \Cref{fig:Baseball} illustrates the data as well as the various point estimates, together with the three estimated priors
\[
\parden_{\textup{JS}} = \cN (\hat{\mu}_{\param},\hat{\sigma}_{\param}^{2}),
\qquad
\parden_{\textup{erp}}^{\cN}
=
\parden_{\textup{erp}}(\Cdot\, |\, \model^{\cN}),
\qquad
\parden_{\textup{erp}}^{\Bin}
=
\parden_{\textup{erp}}(\Cdot\, |\, \model^{\Bin})
\]
\begin{table}[H]
	\centering
	\renewcommand{\arraystretch}{1.3}         
	\begin{tabular}{l x{6em}x{4em}x{4em}x{4em}x{7em}}
		\toprule
		Name & $\hat{\param}_{\MLE} = x$ & $\hat{\param}_{\textup{JS}}$ & $\hat{\param}_{\textup{erp}}^{\cN}$ & $\hat{\param}_{\textup{erp}}^{\Bin}$ & $\param$
		\\
		\midrule
		\rowcolor{black!10}
		Clemente & $0.400 = \nicefrac{18}{45}$ & $0.294 $& $0.303$ & $0.298$ & $0.346 = \nicefrac{127}{367} $\\
		Robinson & $0.378 = \nicefrac{17}{45}$ & $0.289 $& $0.296$ & $0.292$ & $0.298 = \nicefrac{127}{426} $\\
		\rowcolor{black!10}
		Howard & $0.356 = \nicefrac{16}{45}$ & $0.285 $& $0.289$ & $0.287$ & $0.276 = \nicefrac{144}{521} $\\
		Johnstone & $0.333 = \nicefrac{15}{45}$ & $0.280 $& $0.282$ & $0.281$ & $0.222 = \nicefrac{61}{275} $\\
		\rowcolor{black!10}
		Berry & $0.311 = \nicefrac{14}{45}$ & $0.275 $& $0.276$ & $0.275$ & $0.273 = \nicefrac{114}{418} $\\
		Spencer & $0.311 = \nicefrac{14}{45}$ & $0.275 $& $0.276$ & $0.275$ & $0.270 = \nicefrac{126}{466} $\\
		\rowcolor{black!10}
		Kessinger & $0.289 = \nicefrac{13}{45}$ & $0.270 $& $0.269$ & $0.270$ & $0.265 = \nicefrac{155}{586} $\\
		Alvarado & $0.267 = \nicefrac{12}{45}$ & $0.266 $& $0.264$ & $0.265$ & $0.210 = \nicefrac{29}{138} $\\
		\rowcolor{black!10}
		Santo & $0.244 = \nicefrac{11}{45}$ & $0.261 $& $0.258$ & $0.260$ & $0.269 = \nicefrac{137}{510} $\\
		Swaboda & $0.244 = \nicefrac{11}{45}$ & $0.261 $& $0.258$ & $0.260$ & $0.230 = \nicefrac{46}{200} $\\
		\rowcolor{black!10}
		Petrocelli & $0.222 = \nicefrac{10}{45}$ & $0.256 $& $0.254$ & $0.255$ & $0.264 = \nicefrac{142}{538} $\\
		Rodriguez & $0.222 = \nicefrac{10}{45}$ & $0.256 $& $0.254$ & $0.255$ & $0.226 = \nicefrac{42}{186} $\\
		\rowcolor{black!10}
		Scott & $0.222 = \nicefrac{10}{45}$ & $0.256 $& $0.254$ & $0.255$ & $0.303 = \nicefrac{132}{435} $\\
		Unser & $0.222 = \nicefrac{10}{45}$ & $0.256 $& $0.254$ & $0.255$ & $0.264 = \nicefrac{73}{277} $\\
		\rowcolor{black!10}
		Williams & $0.222 = \nicefrac{10}{45}$ & $0.256 $& $0.254$ & $0.255$ & $0.330 = \nicefrac{195}{591} $\\
		Campaneris & $0.200 = \nicefrac{9}{45}$ & $0.252 $& $0.249$ & $0.251$ & $0.285 = \nicefrac{159}{558} $\\
		\rowcolor{black!10}
		Munson & $0.178 = \nicefrac{8}{45}$ & $0.247 $& $0.245$ & $0.247$ & $0.316 = \nicefrac{129}{408} $\\
		Alvis & $0.156 = \nicefrac{7}{45}$ & $0.242 $& $0.242$ & $0.244$ & $0.200 = \nicefrac{14}{70} $\\
		\midrule
		\midrule
		{\Large $\frac{\textup{\textbf{MSE}}}{\textup{\textbf{MSE}}(\hat{\param}_{\MLE})}$} & 1 & 0.312 & 0.312 & 0.310 & (0)
		\\
		\bottomrule
	\end{tabular}
	\caption{Numerical comparison of four different estimates of the batting success rate $\param$ of 18 major league baseball players based on their batting average $x$ of the first 45 at bats early in the 1970 season ($x = \nicefrac{\text{hits}}{\text{number of at bats}}$). As illustrated in \Cref{fig:Baseball}, the three empirical Bayes methods perform very similarly, both in terms of the individual estimates $\hat{\param}_{\textup{JS}} \approx \hat{\param}_{\textup{erp}}^{\cN} \approx \hat{\param}_{\textup{erp}}^{\Bin}$ and in terms of the overall mean squared error (MSE), and strongly outperform the ``overconfident'' maximum likelihood estimate $\hat{\param}_{\MLE} = x$. The ground truth for $\param$ is taken as the batting average of each player over the remaining season.
	}
	\label{table:Baseball}
\end{table}
(note that those are \emph{not} density estimates in the classical sense, since the parameters $\param_{m}$, and not the measurements $x_{m}$, are samples from $\parden_{\text{true}}$).

All three approaches provide very similar results. The main reason for this is that, as is apparent from \Cref{fig:Baseball}, the assumption of a normal prior made by the James--Stein approach seems to be a meaningful parametric choice in this specific example, where the parameter $\param$ represents the performance (in terms of batting success rates) of baseball players. If this was not the case, e.g.\ if another parametrization was chosen by, say, measuring the performance on a logarithmic scale, our nonparametric approach is likely to outperform the James--Stein estimator.
In fact, our approach would provide \emph{exactly} the same results for any reparametrization of the model. To the best of our knowledge, this is a unique feature among all empirical Bayes methods.
\section{The multiparameter case $d>1$}
\label{section:HighDimensions}

In the case of several parameters, the reference prior $\parden_{\textup{ref}}$ is no longer defined as the prior which maximizes the missing information, but by the sequential scheme presented in \citet{berger1992development}. This scheme applies the procedure described in Section \ref{section:ObjectiveEmpirical} successively to the conditional priors $\parden(\param_\delta|\param_1,\dots,\param_{\delta-1})$, $\delta = 1,\dots,d$, after a convenient ordering of the parameters.
This leads us to three possible generalizations of empirical reference priors to the multiparameter case. The construction from Section \ref{section:ObjectiveEmpirical}, in particular the definition of the empirical reference prior, will be referred to as the \emph{one-parameter construction}.
\begin{enumerate}[label=(\Alph*)]
\item
\label{item:MultiparameterChoiceJeffreys}
Adopt Definitions from Section \ref{section:ObjectiveEmpirical} exactly as they are. In the asymptotically normal case given by Condition \ref{cond:AsymptoticNormality}, this corresponds to the optimization problem \eqref{equ:KullbackLeiblerPenaltyMPLE},
\[
\parden_{\textup{erp}} = \argmax_{\parden}\, \log L(\parden) - \gamma \dkl{\parden}{J},
\]
where $J(\param) = \sqrt{|\det i(\param)|}$ denotes the (arbitrarily scaled) Jeffreys prior. This construction provides an extension of the Jeffreys prior, \emph{not} of reference priors.
The reasons why reference priors are favored over Jeffreys prior in dimension $d>1$ are marginalization paradoxes and inconsistencies of the latter, see \citet{bernardo2005reference} and references therein. It is yet unclear in how far these arguments are valid in the presence of data. Hence, this approach might still be justified in the empirical Bayes framework.
\item
\label{item:MultiparameterChoiceReference}
In light of \eqref{equ:KullbackLeiblerPenaltyMPLE} and with the intention of generalizing reference priors, replace $J$ by $\parden_{\textup{ref}}$ in the penalty term:
\[
\parden_{\textup{erp}} = \argmax_{\parden}\, \log L(\parden) - \gamma \dkl{\parden}{\pi_{\textup{ref}}}.
\]
This yields an extension of reference priors and agrees with the one-parameter construction in the asymptotically normal case (Condition \ref{cond:AsymptoticNormality}), but not necessarily in the general case.
\item
\label{item:MultiparameterChoiceSimilarToReference}
Define a sequential scheme similar to the one used for reference priors.
For simplicity, we will restrict the presentation to the case of $d=2$ parameters $\param_1,\param_2$, $\parsp = \parsp_1\times \parsp_2$, but the construction can easily be extended to any number of parameters.
As is common practice for reference priors, the parameters have to be ordered by `inferential importance'.
We will perform similar steps to the ones in \citet[Section 3.8]{bernardo2005reference}.
Note that, as in the case of reference priors, this scheme lacks objectivity since it requires an ordering of the parameters, which is a heuristic element and not unambiguous in many applications.
\begin{algorithm}
\ 
\begin{enumerate}[label=(\roman*)]
	\item
	For every (fixed) $\param_1$, the one-parameter algorithm yields the \emph{conditional} empirical reference prior $\parden_{\textup{erp}}(\param_2|\param_1) = \parden_{\textup{erp}}(\param_2|\param_1,\model,\pardensp,\data)$.
	\item
	By integrating out parameter $\param_2$ we obtain the one-parameter model $\model_1$ given by
	\[
	p(x|\param_1) = \int_{\parsp_2} p(\meas|\param_1,\param_2)\, \parden_{\textup{erp}}(\param_2|\param_1)\, \mathrm d\param_2.
	\]
	Apply the one-parameter construction to $\model_1$ to obtain the \emph{marginal} empirical reference prior $\parden_{\textup{erp}}(\param_1) = \parden_{\textup{erp}}(\param_1|\model,\pardensp,\data)$.
	\item 
	The desired empirical reference prior is defined by $\parden_{\textup{erp}}(\param_1,\param_2) = \parden_{\textup{erp}}(\param_1) \parden_{\textup{erp}}(\param_2|\param_1)$.
\end{enumerate}
\end{algorithm}
\end{enumerate}
(A) and (B) are straightforward generalizations of the theory presented in Section \ref{section:ObjectiveEmpirical}. It would be interesting to analyze the connection between the approaches (B) and (C).
\begin{remark}
	Note that while all three of the above options are invariant under transformations of the measurement $\meas$, only \ref{item:MultiparameterChoiceJeffreys} truly guarantees invariance under transformations of the parameter $\param = (\param_{1},\dots,\param_{d})$, which it inherits from the invariance of the log-likelihood (up to an additive constant, \Cref{theorem:invarianceLandI}) and of the Jeffreys prior.
	The other two options \ref{item:MultiparameterChoiceReference} and \ref{item:MultiparameterChoiceSimilarToReference} depend on the particular choice of how the parameters (or, rather, the parameter components) $\param_{1},\dots,\param_{d}$ are ordered. Hence invariance is only given under reparametrizations that transform each component separately,
	\[
	\varphi(\param) = \big(\varphi_{1}(\param_{1}),\dots,\varphi_{d}(\param_{d})\big),
	\]
	under the assumptions that their ordering is not altered.
\end{remark}
\section{Conclusion}
\label{section:Conclusion}

We successfully applied the approach for the construction of reference priors to determine a transformation invariant penalty term for MPLE, which favors non-informativity of the prior, namely the missing information $\cI[\parden\, |\, \model^\replications]$, $\replications\to\infty$.
This interaction of objective Bayesian analysis and empirical Bayes methods results in a consistent and informative prior estimate, which we termed the empirical reference prior $\parden_{\textup{erp}}$.
The distinctive feature of $\parden_{\textup{erp}}$ is its invariance under reparametrization, which is, to the best of our knowledge, unique among all empirical Bayes methodologies.

The smoothing parameter $\gamma$ tunes the amount of information contained in the prior:
The data, represented by the marginal likelihood $L(\parden)$, yields information about the distribution of $\param$, but maximizing $L(\parden)$ alone overfits the data. The penalty term $\Phi(\pi)$, on the other hand, favors non-informative priors. We performed this trade-off by likelihood cross-validation which we also proved to be invariant under transformations (\Cref{equ:InvarianceSmoothingParameter}).

Besides invariance, our method has further favorable properties such as compatibility with sufficient statistics and, under the assumption of asymptotic normality, strict concavity of the resulting optimization problem \eqref{equ:KullbackLeiblerPenaltyMPLE}. So far, our approach lacks an explicit formula for the empirical reference prior.\footnote{An explicit formula for reference priors \citep[Theorem 7]{berger2009formal} exists, so far, only in the one-parameter case $d=1$ and under rather restrictive assumptions.
Further, it requires the numerical approximations of integrals in the possibly high-dimensional measurement space $\meassp$ as well as the computation or at least some estimate of the limit $\replications\to\infty$.}

We applied our methodology to a synthetic data set to illustrate the invariance property of the empirical reference prior as well as to a real-life example, the famous baseball data set of \citet{efron1975data} collected to demonstrate the advantages of the James--Stein estimator, a parametric empirical Bayes method which we compare our method to.
In the latter case, both approaches gave nearly identical results, the most natural explanation being that the parametric choice of the prior used in the James--Stein approach appears to be a good approximation in this specific example with this specific parametrization.
Our nonparametric empirical reference prior approach is likely to provide better results in situations where the ``true'' prior $\parden_{\text{true}}$ can not be well-approximated by a Gaussian.
Our code for the numerical computations is available at \url{https://github.com/axsk/ObjectiveEmpiricalBayes.jl}.

The generalization of our approach to several dimensions is not unambiguous and has been discussed in Section~\ref{section:HighDimensions}.

\section*{Acknowledgements}
\addcontentsline{toc}{section}{Acknowledgements}

We thank Ingmar Schuster for countless enlightening discussions on the topic.

\bibliographystyle{abbrvnat}
\bibliography{myBibliography}
\addcontentsline{toc}{section}{References}

\end{document}